\newcommand{\indep}{\!\perp\!\!\!\perp}
\newcommand*{\centernot}{%
  \mathpalette\@centernot
}
\def\@centernot#1#2{%
  \mathrel{%
    \rlap{%
      \settowidth\dimen@{$\m@th#1{#2}$}%
      \kern.5\dimen@
      \settowidth\dimen@{$\m@th#1=$}%
      \kern-.5\dimen@
      $\m@th#1\not$%
    }%
    {#2}%
  }%
}
\renewcommand{\cite}[1]{\citeyear{#1}}
\theoremstyle{plain}
\newtheorem{lemma}{Lemma}
\newtheorem{theorem}{Theorem}
\newtheorem{proposition}{Proposition}
\newtheorem{assumption}{Assumption}
\theoremstyle{definition}
\newtheorem{example}{Example}
\newtheorem{remark}{Remark}
\long\def\symbolfootnote[#1]#2{\begingroup%
\def\thefootnote{\fnsymbol{footnote}}\footnote[#1]{#2}\endgroup}
\newcommand{\documenttitle}{Thesis}
\renewcommand{\max}{\operatornamewithlimits{max}}
\renewcommand{\min}{\operatornamewithlimits{min}}
\pgfplotsset{compat=1.15}
\title{Asymptotic Theory for Two-Way Clustering}
\author{Luther Yap \thanks{Department of Economics, Princeton University. Email: \texttt{lyap@princeton.edu}. I thank Michal Koles{\'a}r, David Lee, and Ulrich M\"{u}ller for helpful comments and suggestions. This paper supersedes an earlier paper circulated as ``General Conditions for Valid Inference in Multi-Way Clustering".}}
\date{\today}
\begin{document}
\pagenumbering{arabic}
\maketitle


\begin{abstract}
This paper proves a new central limit theorem for a sample that exhibits two-way dependence and heterogeneity across clusters. Statistical inference for situations with both two-way dependence and cluster heterogeneity has thus far been an open issue. The existing theory for two-way clustering inference requires identical distributions across clusters (implied by the so-called separate exchangeability assumption). Yet no such homogeneity requirement is needed in the existing theory for one-way clustering. The new result therefore theoretically justifies the view that two-way clustering is a more robust version of one-way clustering, consistent with applied practice. In an application to linear regression, I show that a standard plug-in variance estimator is valid for inference. 
\end{abstract}

\clearpage
\section{Introduction}
Clustering standard errors on multiple dimensions is common and attractive in applied econometrics because it allows observations to be dependent whenever they share a cluster on any dimension. 
Though more broadly applicable, a common instance of two-way clustering is in linear regressions, where a researcher wants to do inference on the coefficient of interest when the residual is two-way clustered. 
The variance estimator proposed by \citet{cameron2011robust} (henceforth CGM) has thus been widely applied to contexts with such two-way dependence.\footnote{CGM has 3886 citations on Google Scholar at the time of writing.}
For instance, \citet{nunn2011slave} clustered on ethnic group and district when studying the effect of slave trade on trust; \citet{michalopoulos2013pre} clustered on country and ethnolinguistic family when studying the effect of pre-colonial institutions on development; \citet{jackson2018test} clustered on teacher and student when studying the effect of the teacher on students' skill; \citet{neumark2019harder} clustered on resume and job ad when studying the effect of age on getting a call-back. 
The existing justification for the asymptotic validity of the CGM estimator and other inference procedures in two-way clustering (e.g., \citet{mackinnon2021wild}; \citet{davezies2021empirical};  \citet{menzel2021bootstrap}) relies on separate exchangeability, which implies homogeneity of clusters, a restriction that is not required in one-way clustering. This paper provides sufficient general conditions for valid inference in two-way clustering by proving that, even with cluster heterogeneity, a central limit theorem holds, and the CGM variance estimator is consistent. 

An environment with two-way clustering permits dependence whenever observations share at least one cluster. To fix ideas, consider \citet{jackson2018test}: observations of the same student or of the same teacher are plausibly correlated, but two observations of different students and different teachers are assumed to be independent.\footnote{This setting permits more general dependence structures than one-way clustering. If there is one-way clustering by student, then two observations from different students are automatically independent. In two-way clustering, two observations from different students are not necessarily independent because they may share the same teacher.} 
The CGM variance estimator accommodates such dependence, and a subsequent literature provided a theoretical basis for its validity: \citet{mackinnon2021wild} obtained sufficient conditions for validity of the CGM estimator in regression models; \citet{davezies2021empirical} obtained analogous results for empirical processes. \citet{menzel2021bootstrap} also showed the validity of a bootstrap procedure for two-way clustering that is robust to asymptotic non-normalities.\footnote{\citet{menzel2021bootstrap} pointed out that a purely interactive data-generating processes unique to two-way dependence has an asymptotic distribution that is not normal. Section \ref{sec:CLT} will consider this process and show how the assumptions of this paper rule it out.} The theoretical basis for inference thus far relies on separate exchangeability, the assumption that random variables are exchangeable on either clustering dimension, though not necessarily both.

As noted by \citet{mackinnon2021wild}, however, separate exchangeability implies identical marginal distributions. Separate exchangeability in the student-teacher example thus implies the random variables for all students must be drawn from the same distribution, including students of different cohorts over time. 
As \citet[p.~146]{wooldridge2010econometric} notes in the discussion of pooled data in his graduate textbook, distributions of variables tend to change over time, so the identical distribution assumption is not usually valid.
In other examples, separate exchangeability implies that countries \citep{michalopoulos2013pre} and jobs \citep{neumark2019harder} are identically distributed.
Applied researchers surely would want size to be controlled in such heterogeneous environments, but the existing theories that rely on separate exchangeability do not imply this result. 
Further, in linear regressions with regressor $X_i$ and residual $u_i$, asymptotic theory is applied to $X_i u_i$. Separate exchangeability of the product implies that the regressors must also be separately exchangeable, which is not plausible when the regressors include a time trend, say.


In contrast, existing asymptotic theory on one-way clustering (e.g., \citet{hansen2019asymptotic}; \citet{djogbenou2019asymptotic}) allows the distribution of the random variable to be heterogeneous over clusters. Since the only available conditions for the validity of two-way clustering require separate exchangeability, the literature lacks conditions for two-way clustering that generalize one-way clustering and permit heterogeneity over clusters. This paper fills the gap, and thus justifies two-way clustering as a more robust version of one-way clustering. In particular, when the conditions of the central limit theorem hold, the variance estimator for two-way clustering asymptotically converges to the true variance for one-way clustering when the researcher clusters on more dimensions than required. 

\begin{example}
To illustrate separate exchangeability, consider an additive random effects model. Individual $i$ who belongs to cluster $g(i)$ on the $G$ dimension and cluster $h(i)$ on the $H$ dimension is characterized by a random variable $W_i$ generated from $W_i = \alpha_{g(i)} + \gamma_{h(i)} + \varepsilon_i$, where cluster-specific $\alpha_1,\dots,\alpha_g,\dots,\alpha_G, \gamma_1,\dots, \gamma_h,\dots,\gamma_H$ and individual-specific $\varepsilon_1,\dots,\varepsilon_i,\dots,\varepsilon_n$ are mutually independent. If we assume separate exchangeability, then $\alpha_g$, $\gamma_h$, and $\varepsilon_i$ are iid.\footnote{To see this, for individuals $i$ and $j$ where $g(i) \ne g(j)$, $h(i) = h(j)=h$, separate exchangeability implies $\alpha_{g(i)} + \gamma_h + \varepsilon_i \stackrel{d}{=} \alpha_{g(j)} + \gamma_h + \varepsilon_j$. Since $\alpha_g, \gamma_h$ and $\varepsilon_i$ are independent, $\varepsilon_i \stackrel{d}{=} \varepsilon_j$ and $\alpha_g \stackrel{d}{=} \alpha_{g'}$.} In contrast, under one-way cluster asymptotics, the cluster-specific error $\alpha_g$ need not be identically distributed. The general conditions provided in this paper permit valid inference even when $\alpha_g, \gamma_h, \varepsilon_i$ are not identically distributed in this model.
\end{example}

The main result is a central limit theorem for two-way clustering with heterogeneous cluster sizes and distributions. This result is proven using Stein's method. It adapts the strategy from \citet{ross2011fundamentals} Theorem 3.6: I first derive an upper bound on the distance between the distribution of a pivotal statistic and the standard normal, then show that this distance converges to zero asymptotically. This proof strategy hence yields intermediate results on non-asymptotic Berry-Esseen type bounds that provide worst-case bounds on the quality of approximation between the pivotal statistic and the standard normal, which may be of independent interest. I apply the theorem to a simple setting of a linear regression, but it is more broadly applicable to many other econometric procedures that exhibit a similar clustering structure.

This paper contributes to the literature on multi-way clustering and Stein's method. 
This paper differs from the existing literature on multi-way clustering (e.g., \citet{mackinnon2021wild}; \citet{davezies2021empirical};  \citet{menzel2021bootstrap}; \citet{chiang2022standard}; \citet{chiang2023using}) in that it does not rely on separate exchangeability.
Stein's method has been applied to other contexts such as two-way fixed effects \citep{verdier2020estimation}, and spillover effects (e.g., \citet{chin2018central}, \citet{leung2022rate} and \citet{braun2023estimation}). 
Unlike the aforementioned papers, this paper speaks directly to multi-way clustering, and it makes a modification to the proof of \citet{ross2011fundamentals} Theorem 3.6 to obtain the result instead of applying the theorem directly. 

\section{Setting and Main Result}\label{sec:CLT}
\subsection{Setup}
Consider a setup with two-way clustering on dimensions $G$ and $H$ for random vectors $\{ W_i \}_{i=1}^n$, where $W_i := (W_{i1}, W_{i2}, \cdots, W_{iK})' \in \mathbb{R}^K$ and $i=1,\dots,n$ is the unit of observation. For example, $G$ could denote states and $H$ could denote industries. Clustering in more than two dimensions is possible, and derivations are entirely analogous. This section establishes a central limit theorem (CLT) for $\sum_i W_i$, as $n \rightarrow \infty$. Here and in the following, sums are over (subsets of) $\{ 1, 2, \dots, n \}$. For $C \in \{G ,H \}$, let $\mathcal{N}^C_c$ denote the set of observations in cluster $c$ on dimension $C$ --- this setup partitions the sample on the $C$ dimension.

Let $g(i)$ and $h(i)$ denote the cluster that observation $i$ belongs to on the $G$ and $H$ dimensions respectively. These cluster identities are nonstochastic and observed. Let $N^C_c := |\mathcal{N}^C_c|$ denote the cluster size for $C \in \{G, H \}$ and $N_{gh} := |\mathcal{N}^G_g \cap \mathcal{N}^H_h|$. These cluster sizes are allowed to be heterogeneous in a way that will be formalized in the assumptions below. $W_i$ is assumed to be independent of the joint distribution of $\{ W_j \}$ for $j \notin \mathcal{N}^G_{g(i)} \cup \mathcal{N}^H_{h(i)} =: \mathcal{N}_i$, i.e., when $i$ and $j$ do not share a cluster on either dimension. Hence, $\mathcal{N}_i$ is the set of observations that are arbitrarily dependent with $i$. This environment is stated as Assumption \ref{asmp:indep}. 

\begin{assumption} \label{asmp:indep} 
With $\mathcal{N}_i = \mathcal{N}^G_{g(i)} \cup \mathcal{N}^H_{h(i)}$,
\begin{enumerate} [topsep=0pt,label=(\alph*)]
    \item $W_i \indep \{ W_j \}_{j \notin \mathcal{N}_i}$ for all $i$.
    \item For observations $i,j $ and $k \in \mathcal{N}_i,l \in \mathcal{N}_j$ and all nonstochastic $\mu \in \mathbb{R}^K$, if $j,l \notin (\mathcal{N}_i \cup \mathcal{N}_k)$, then $Cov(\mu^\prime W_i W_k^\prime \mu, \mu^\prime W_j W_l^\prime \mu) =0$.
\end{enumerate}
\end{assumption}

While the dependence structure is implicitly described in the setup of many clustering papers (e.g., \citet{hansen2019asymptotic}; \citet{menzel2021bootstrap}), Assumption \ref{asmp:indep} makes the dependence structure explicit. Assumption \ref{asmp:indep}(a) is a dissociation assumption similar to definition 3.5 of \citet{ross2011fundamentals} required to apply Stein's method. Assumption \ref{asmp:indep}(b) is required because, for a scalar $W_i$, a crucial step of the proof requires $E[W_i W_j W_k W_l] = E[W_i W_k] E[W_j W_l]$ when $j,l$ do not share any cluster with $i,k$. Even when $W_i \indep (W_j, W_l)$ and  $W_k \indep (W_j, W_l)$, we cannot conclude that $E[W_i W_j W_k W_l] = E[W_i W_k] E[W_j W_l]$ in general, because independence of marginal distributions does not imply independence of the joint distribution. 
Assumption \ref{asmp:indep}(b) hence makes an assumption on the joint distribution. It can alternatively be stated as $(W_i, W_k) \indep (W_j, W_l)$, which is stronger but more interpretable than the zero-covariance assumption. 
I further discuss the relationship between Assumption \ref{asmp:indep} and the existing literature in Section \ref{sec:dependence}.

Assumption \ref{asmp:indep} is agnostic about the dependence structure when $W_i$ and $W_j$ share at least one cluster. It also allows the data generating process to be arbitrarily heterogeneous across different clusters, mimicking the heterogeneity permitted in one-way clustering (e.g., \citet{hansen2019asymptotic}). Since one-way clustering is a special case of two-way clustering where the $H$ cluster consists of single observations, the result here generalizes the existing results in one-way clustering. In contrast, the existing literature on two-way clustering assumes separate exchangeability that additionally imposes identical distribution over clusters, so it does not generalize the results on one-way clustering. 
For positive definite matrix $Q$, let $\lambda_{\min} (Q)$ denote the smallest eigenvalue of $Q$. Then, let $Q_n := Var \left( \sum_i W_i \right)$ denote the variance of the sum and $\lambda_n := \lambda_{\min}(Q_n)$ denote its smallest eigenvalue. For example, when $K=1$, $W_i$ is a scalar and $\lambda_n = Q_n = Var(\sum_i  W_i)$. $K_0$ is used throughout the paper to denote an arbitrary constant.

\begin{assumption} \label{asmp:regularity}
For $C \in \{G ,H \}$, and $k \in \{1, 2, \cdots, K \}$, there exists $K_0 < \infty$ such that:
\begin{enumerate}[topsep=0pt,label=(\alph*)]
    \item $E[W_{ik}^4] \leq K_0 $ for all $i$.
    \item $\frac{1}{\lambda_n} \max_c (N_c^C)^2 \rightarrow 0$.
    \item $\frac{1}{\lambda_n} \sum_c  (N_c^C)^2  \leq K_0 $.
\end{enumerate}  
\end{assumption}

Assumption \ref{asmp:regularity}(a) requires the fourth moment to be bounded, which is stronger than the moment condition in one-way clustering.\footnote{See Equation (7) of \citet{hansen2019asymptotic} for the condition in one-way clustering.} The proof in one-way clustering usually verifies a Lindeberg condition because blocks of observations are independent of each other. With two-way dependence, we no longer have independent blocks because each cluster can have observations that are dependent on observations from a different cluster when these observations share a cluster on a different dimension. Hence, a different proof strategy is required. The proof in this paper uses Stein's method, which requires stronger moment restrictions, but provides a non-asymptotic bound on the approximation error --- details are in Subsection \ref{sec:proof_sketch}. By using this strategy, a bounded fourth moment is required.

Assumption \ref{asmp:regularity}(b) requires the contribution of the largest cluster to be small relative to the total variance. This condition mimics the sparsity condition in the networks literature (e.g., \citet{graham2020sparse}). Intuitively, this condition is required so that the removal of a cluster does not change the variance substantively. This assumption allows the ratio of any two cluster sizes to diverge to infinity. It is identical to Equation (12) of \citet{hansen2019asymptotic} for one-way clustering. Assumption \ref{asmp:regularity}(b) also rules out having components that are perfectly correlated: if the components of the vector were perfectly correlated (i.e., $\mu^\prime W_i =0$ for some $\mu \ne (0,\dots,0)^\prime$), then $\lambda_n =0$. If cluster sizes are uniformly bounded, and $\lambda_n \rightarrow \infty$, then Assumption \ref{asmp:regularity}(b) is satisfied.\footnote{Assumption \ref{asmp:regularity}(b) is hence a more general version of sparsity than having the size of the dependency neighborhood (i.e., the number of observations plausibly correlated with some observation $i$) being bounded above. The conditions are also comparable with \citet{verdier2020estimation} in the two-way fixed effects literature: when the neighborhood size is bounded, $\lambda_n \asymp n$, which matches his assumption 2(c).}

Assumption \ref{asmp:regularity}(c) is a summability condition that requires $\lambda_n$ not to be too small, and requires $\lambda_n$ to be the same order as $\sum_c (N^C_c)^2$, i.e., $\lambda_n \asymp \sum_c (N^C_c)^2$, $C \in \{ G,H \}$.\footnote{For sequences $a_n$ and $b_n$, $a_n \asymp b_n$ if and only if there exists $K_0 <\infty$ such that $a_n/b_n, b_n/a_n \in [-K_0, K_0]$ for all elements in the sequence.} 
With strictly positive covariance within clusters, $\lambda_n \asymp \sum_c (N^C_c)^2$ is satisfied. However, if the researcher were conservative and clustered on $C$ when the data is indeed iid, then $\lambda_n \asymp n$, which then requires $\sum_c (N^C_c)^2 \asymp n$ for the condition to hold --- this condition holds when the cluster sizes are not too large. 
The assumption that $(1/\lambda_n) \sum_c (N_c^C)^2 \leq K_0$ matches Equation (11) of \citet{hansen2019asymptotic}.

\begin{remark}
Assumption \ref{asmp:regularity}(b) and \ref{asmp:regularity}(c) rule out the following purely interactive model. For $g \in \{ 1, \cdots, M \}$, $h \in \{ 1, \cdots, M \}$ and $N_{gh} =1$, we observe $W_{gh} = \alpha_g \gamma_h$, where $\alpha_g$ and $\gamma_h$ are iid with mean zero and variances $\sigma_\alpha^2$ and $\sigma_\gamma^2$ respectively, so there are $M^2$ observations. As pointed out by \citet{menzel2021bootstrap}, this model has an asymptotic distribution that is non-normal, with no analog in one-way clustering. To see this, $ \sum_{g,h} W_{gh}/M = \left( \sum_g  \alpha_g / \sqrt{M} \right) \left(  \sum_h \gamma_h / \sqrt{M} \right) \xrightarrow{d} Z_1 Z_2$, where $Z_1$ and $Z_2$ are independent standard normal random variables. This limiting distribution is also known as Gaussian chaos. Since $\max_g (N^G_g)^2 /\lambda_n = M^2/(M^2 \sigma_\alpha^2 \sigma_\gamma^2) = 1/(\sigma_\alpha^2 \sigma_\gamma^2)$ does not converge to 0, Assumption \ref{asmp:regularity}(b) fails. Further, $\sum_g (N^G_g)^2 /\lambda_n = M^3/(M^2 \sigma_\alpha^2 \sigma_\gamma^2) = M/\sigma_\alpha^2 \sigma_\gamma^2 \rightarrow \infty$ violates Assumption \ref{asmp:regularity}(c).
\end{remark}

\subsection{Main Result}

\begin{theorem} \label{thm:main}
Under Assumptions \ref{asmp:indep} and \ref{asmp:regularity}, $Q_n^{-1/2} \sum_i (W_i - E[W_i]) \xrightarrow{d} N(0, I_K)$. Further, if $E[W_i] = 0$ $\forall i$, then $Q_n^{-1/2} \hat{Q}_n Q_n^{-1/2} \xrightarrow{p} I_K$, where $\hat{Q}_n := \sum_i \sum_{j \in \mathcal{N}_i} W_i W_j'$.
\end{theorem}

The theorem tells us that, under the aforementioned conditions, $Q_n^{-1/2} \sum_i (W_i - E[W_i])$ is asymptotically standard normal and the plug-in variance estimator following CGM is consistent for two-way clustering. One-way clustering is a special case of this theorem when one dimension is weakly nested within the other: examples include $G=H$ so both dimensions are identical, and clustering by county and state (as counties are nested in states). A sufficient condition for consistent variance estimation is $E[W_i] = 0$, similar to Theorem 3 of \citet{hansen2019asymptotic}. This assumption is sufficient in many applications: for example, linear regressions considered in Section \ref{sec:application} are identified by requiring the expectation of the residual term to be zero. 


\begin{remark}
While the CGM variance estimator is valid in this environment without separate exchangeability, we must be more careful with bootstrap methods that were developed under separate exchangeability (e.g., \citet{menzel2021bootstrap}, \citet{mackinnon2021wild}). Bootstrap methods often resample cluster-specific means, such as $\hat{\alpha}_g = (1/N^G_g) \sum_{i \in \mathcal{N}^G_g} W_i - (1/n) \sum_i W_i$. Consider a data-generating process where, with  $\alpha_g = (1/N^G_g) \sum_{i \in \mathcal{N}^G_g} [W_i] - (1/n) \sum_i E[W_i]$, odd-numbered $g$ clusters have $\alpha_g=-1$ and even-numbered $g$ clusters have $\alpha_g=2$, and there are twice as many units in odd-numbered clusters as even-numbered clusters.
Such a process is not exchangeable. Resampling $\hat{\alpha}_g$'s with equal probability results in a positive mean, which invalidates naive bootstrap procedures. 
\end{remark}

The following two subsections discuss technicalities on the dependence structure and the proof sketch. A general-interest audience may wish to proceed immediately to Section \ref{sec:application}.

\subsection{Discussion of Dependence Structure} \label{sec:dependence}

To compare the setup used in Assumption \ref{asmp:indep} to the existing literature, I carefully define a few terms used in \citet{menzel2021bootstrap}, whose setup uses a \emph{dissociated separately exchangeable} array. Let $Y_{gh}$ denote an infinite array of observations in cluster $g$ on the $G$ dimension and cluster $h$ on the $H$ dimension.
$Y_{gh}$ is a \emph{separately exchangeable} array if, for any integers $\tilde{G}, \tilde{H}$ and permutations $\pi_1: \{ 1,\dots,\tilde{G} \} \rightarrow \{ 1,\dots,\tilde{G} \}$ and $\pi_2:  \{ 1,\dots,\tilde{H} \} \rightarrow \{ 1,\dots,\tilde{H} \}$, we have:
\begin{align*}
    (Y_{\pi_1(g)\pi_2(h)})_{g,h} \stackrel{d}{=} (Y_{gh})_{g,h}
\end{align*}
where $\stackrel{d}{=}$ denotes equality in distribution. Such an array is \emph{dissociated} if, for any $G_0, H_0 \geq 1$, $(Y_{gh})_{g=1,h=1}^{g=G_0,h=H_0}$ is independent of $(Y_{gh})_{g>G_0,h>H_0}$. Dissociation is how the existing literature formally incorporates the multi-way clustering structure. Separate exchangeability implies that the cluster indices are not meaningful, and it is stronger than having identical distributions across clusters. This environment is a special case of Assumption \ref{asmp:indep}, as the following proposition claims.
\begin{proposition} \label{prop:asmp1}
A dissociated separately exchangeable array satisfies Assumption \ref{asmp:indep}.
\end{proposition}


Assumption \ref{asmp:indep} on independence and zero covariance is also arguably more transparent and interpretable for an economics audience than formal generalizations of separate exchangeability, e.g., relative exchangeability in \citet{crane2018relatively} that requires defining a \emph{signature}, \emph{arity}, \emph{structure}, and \emph{embedding}. 

\subsection{Proof Sketch} \label{sec:proof_sketch}
The proof of Theorem \ref{thm:main} proceeds by first proving a CLT for a scalar random variable, then applying the Cramer-Wold device to obtain the multivariate CLT. The scalar CLT is proven using Stein's method. I adapt the proof strategy from \citet{ross2011fundamentals} Theorem 3.6 to obtain an upper bound on the Wasserstein distance between a pivotal statistic and the standard normal random variable. By exploiting the two-way clustering structure, the upper bound on the distance can be shown to converge to zero. All details are in Appendix \ref{sec:proof_thm1}.

For ease of exposition, consider a simpler environment where $K=1$, and $E[W_i]=0$. Let $\sigma_n^2 := Q_n$, $R = \sum_i W_i/ \sigma_n$, and $Z \sim N(0,1)$. Lemma \ref{lem:modified_ross} in Appendix \ref{sec:proof_thm1} provides an explicit bound on the Wasserstein distance between $R$ and $Z$. With $d_W(.)$ denoting the Wasserstein distance, and $d_K(.)$ denoting the Kolmogorov distance, Proposition 1.2 from \citet{ross2011fundamentals} implies that $d_K(R,Z) \leq (2/\pi)^{1/4} \sqrt{d_W(R,Z)}$. The Kolmogorov distance is the maximal distance between two CDF's, so it is informative of the maximum distance between the distribution of the pivotal statistic and the standard normal. Then, by using Assumption \ref{asmp:indep} to adapt the proof of Theorem 3.6 in \citet{ross2011fundamentals},
\begin{align*}
    d_W(R,Z) \leq  \frac{1}{\sigma_n^3} \sum_i \sum_{j,k \in \mathcal{N}_i} E[|W_i| W_j W_k] + \frac{\sqrt{2}}{\sqrt{\pi} \sigma_n^2} \sqrt{Var\left( \sum_i \sum_{j \in \mathcal{N}_i} W_i W_j \right)}
\end{align*}

Observe that this intermediate result is informative of the quality of the normal approximation. This bound on the Wasserstein distance (and hence the Kolmogorov distance) is non-asymptotic, and of the Berry-Esseen type, thereby giving a worst-case bound on the distance between the pivotal statistic and the standard normal.

At this point, my proof departs from the proofs in the existing statistical literature that employ Stein's method (e.g., \citet{chen2004normal}). Let $N_i := |\mathcal{N}_i|$. H\"{o}lder's inequality is employed on objects such as $\sum_i \sum_{j,k \in \mathcal{N}_i} E[|W_i| W_j W_k]$. The existing literature uses the $L^1$ norm of moments $E[|W_i|^3]$ and the $L^\infty$ norm of $N_i$, resulting in $\left( \max_m N_m \right)^2 \sum_i E[|W_i|^3]$. In contrast, my proof uses the $L^\infty$ norm of $E[|W_i|^3]$ and the $L^1$ norm of $N_i$, resulting in $\max_m E[|W_m|^3] \sum_i N_i^2$. Hence,
\begin{align*}
    \frac{1}{\sigma_n^3} \sum_i \sum_{j,k \in \mathcal{N}_i} E[|W_i| W_j W_k]  \leq \frac{1}{\sigma_n^3} \max_m E[|W_m|^3] \sum_i N_i^2 
\end{align*}

Since $\max_m E[|W_m|^3]$ is bounded by Assumption \ref{asmp:regularity}(a), it suffices to show that $\sum_i N_i^2 / \sigma_n^3 \rightarrow 0$. Due to Assumption \ref{asmp:indep}(a), $N_i \leq N^G_{g(i)} + N^H_{h(i)}$, so
\begin{align*}
    \frac{1}{\sigma_n^3} \sum_i N_i^2 \leq \frac{1}{\sigma_n^3} \sum_i (N^G_{g(i)} + N^H_{h(i)})^2 & \leq \frac{1}{\sigma_n^3} \max_{g,h} (N^G_g + N^H_h) \sum_i (N_{g(i)} + N_{h(i)}) \\
    &\leq \left[ \frac{1}{\sigma_n} \max_{g,h} (N^G_g + N^H_h) \right] \frac{1}{\sigma_n^2}\left( \sum_g (N^G_g)^2 + \sum_h (N^H_h)^2 \right)
\end{align*}
Since $\lambda_n = \sigma_n^2$ when $K=1$, $\max_{g,h} (N^G_g + N^H_h) /\sigma_n \rightarrow 0$ by Assumption \ref{asmp:regularity}(b) and the final term $\left( \sum_g (N^G_g)^2 + \sum_h (N^H_h)^2 \right) /\sigma_n^2$ is bounded by Assumption \ref{asmp:regularity}(c). Hence, the term is $o(1)$.

A similar argument is made for the fourth moment that features in $Var\left( \sum_i \sum_{j \in \mathcal{N}_i} W_i W_j \right)$. To complete the proof for variance estimation, observe that since the fourth moments exist, the consistency of the plug-in variance estimator can be proven by using Chebyshev's inequality and the existing intermediate results. 

\begin{remark}
By modifying the proof of Theorem 3.6 in \citet{ross2011fundamentals}, the conditions in this paper permit some forms of heterogeneity in cluster sizes that Theorem 3.6 of \citet{ross2011fundamentals} does not. The following is one such example. All observations are the only observation in their $H$ cluster, i.e., $h(i)=i$. On the $G$ dimension, the first cluster has size $N^G_1 = n^{1/4}$, while all other clusters have size 1. Then, $\lambda_n \asymp n^{1/2} + (n - n^{1/4}) \asymp n$ and $(N^G_1)^2 / \lambda_n \asymp n^{1/2}/n = o(1)$, so the conditions of Theorem \ref{thm:main} are satisfied. However, Theorem 3.6 of \citet{ross2011fundamentals} bounds the Wasserstein distance by $\left( N_1^{2}/ \lambda_n^{3/2} \right) \sum_i E |W_i|^3$ and a term that involves the fourth moment. We have $N_1^{2}/ \lambda_n^{3/2} \sum_i E |W_i|^3 \asymp n^{-1} \sum_i E |W_i|^3 \ne o(1)$, so we may not obtain convergence.  
\end{remark}

\begin{remark}
There are several early papers in the probability theory literature that deliver similar results, but are insufficient for Theorem \ref{thm:main}. For instance, Theorem 2 of \citet{janson1988normal} is a central limit theorem that uses the condition (with $m=3$):
\begin{align*}
    \left( \frac{n}{\max_i N_i} \right)^{1/3} \frac{\left( \max_i N_i \right) \max_i |W_i|}{\sigma_n} =  \left( \frac{n}{\sigma_n^3} \left( \max_i N_i \right)^2 \right)^{1/3}  \max_i |W_i| \rightarrow 0
\end{align*}
In this proof sketch, I have shown that $\sum_i N_i^2 / \sigma_n^3 \rightarrow 0$, but $n (\max_i N_i )^2 / \sigma_n^3 \geq \sum_i N_i^2 / \sigma_n^3$, so the \citet{janson1988normal} condition need not hold in this environment.
\end{remark}

\section{Theory for Least Squares Regression} \label{sec:application}
This section applies Theorem \ref{thm:main} to linear regressions, showing that using the normal approximation with the CGM variance estimator is valid. Consider a linear model where the scalar outcome $Y_i$ is generated by
\begin{align*}
    Y_i = X_i'\beta + u_i
\end{align*}

with $X_i \in \mathbb{R}^K$. We are interested in estimating $\beta$. Suppose $E[X_i u_i] =0$ for all $i$, and $(X_i^\prime, u_i)$ is allowed to be two-way clustered. The standard OLS estimator is
\begin{align*}
    \hat{\beta} = \left(  \sum_i X_i X_i' \right)^{-1} \left( \sum_i X_i Y_i \right) = \beta + \left(  \sum_i X_i X_i' \right)^{-1} \left( \sum_i X_i u_i \right)
\end{align*}

This object is assumed to be well-defined in that $\sum_i X_i X_i'$ is invertible. Define $S_n := \sum_i E[X_i X_i']$ and $Q_n := Var \left( \sum_i X_i u_i \right)$, and denote their sample analogs as $\hat{S}_n = \sum_i X_i X_i'$ and $\hat{Q}_n := \sum_i \sum_{j \in \mathcal{N}_i} \hat{u}_i \hat{u}_j X_i X_j'$. Let the smallest eigenvalue of $Q_n$ be $\lambda_n := \lambda_{\min}(Q_n)$. The asymptotic variance of $\hat{\beta}$ and its sample analog are $V(\hat{\beta}) := S_n^{-1} Q_n S_n^{-1}$ and $ \hat{V}(\hat{\beta}) := \hat{S}_n^{-1} \hat{Q}_n \hat{S}_n^{-1}$ respectively.

Assumption \ref{asmp:stochasticX} provides sufficient conditions for the estimator $\hat{\beta}$ to be asymptotically normal and for the CGM variance estimator to be consistent. The conditions mimic Assumption \ref{asmp:regularity} so that Theorem \ref{thm:main} is applicable to the random vector $X_i u_i$. The new condition is a weak regularity condition that $\lambda_{\min} \left( S_n /n \right) \geq K_1 > 0$, mimicking the rank condition in OLS.

\begin{assumption} \label{asmp:stochasticX}
For $C \in \{G ,H \}$, and $k \in \{1, 2, \cdots, K \}$, there exists $K_0<\infty$ and $K_1 >0$ such that:
\begin{enumerate}[topsep=0pt,label=(\alph*)]
    \item $E[u_i^4| X_i] \leq K_0$, $E[X_{ik}^4] \leq K_0$, $E[X_i u_i] =0$ for all $i$.
    \item $\frac{1}{\lambda_n} \max_c (N^C_c)^2 \rightarrow 0$.
    \item $\frac{1}{\lambda_n} \sum_c (N^C_c)^2 \leq K_0$.
    \item $(X_i', u_i)' \indep \{(X_j', u_j)'\}_{j \notin \mathcal{N}_i}$. For observations $i,j $ and $k \in \mathcal{N}_i,l \in \mathcal{N}_j$ and all nonstochastic $\mu \in \mathbb{R}^K$, if $j,l \notin (\mathcal{N}_i \cup \mathcal{N}_k)$, then $(X_i', u_i, X_k', u_k)' \indep (X_j', u_j, X_l', u_l)'$.
    \item $\lambda_{\min} \left( \frac{1}{n} S_n \right) \geq K_1 $. 
\end{enumerate}  
\end{assumption}

\begin{proposition} \label{prop:stochasticX}
Under Assumption \ref{asmp:stochasticX}, $Q_n^{-1/2} S_n (\hat{\beta} - \beta) \xrightarrow{d} N(0,I_K)$, and $[S_n^{-1} Q_n S_n^{-1}]^{-1} [\hat{S}_n^{-1} \hat{Q}_n \hat{S}_n^{-1}] \xrightarrow{p} I_K$.
\end{proposition}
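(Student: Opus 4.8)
The plan is to apply Theorem \ref{thm:multivariate} to the mean-zero random vector $W_i = X_i u_i$ and then absorb the estimation error $\hat\beta-\beta$ into asymptotically negligible remainders. First I would verify that Assumption \ref{asmp:stochasticX} delivers the hypotheses of Theorem \ref{thm:multivariate} for $W_i = X_i u_i$: $E[W_i] = E[X_i E[u_i\mid X_i]] = 0$; Assumption \ref{asmp:stochasticX}.4 is Assumption \ref{asmp:indep}; the conditional fourth-moment bound with $E[X_{ik}^4]\le K_0$ gives $E[(X_{ik}u_i)^4] = E[X_{ik}^4 E[u_i^4\mid X_i]]\le K_0^2$, i.e. Assumption \ref{asmp:regularity}.1; and with $\omega_i=1$, Assumptions \ref{asmp:stochasticX}.2--\ref{asmp:stochasticX}.3 are exactly Assumptions \ref{asmp:regularity}.2--\ref{asmp:regularity}.3 (since $A_{ij}\le 1$). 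Theorem \ref{thm:multivariate} then yields $Q_n^{-1/2}\sum_i X_iu_i\xrightarrow{d}N(0,I_K)$ and, by part 1, $Q_n^{-1}\tilde Q_n\xrightarrow{p}I_K$, where $\tilde Q_n:=\sum_i\sum_{j\in\mathcal{N}_i}u_iu_jX_iX_j'$ is the infeasible CGM estimator built from true residuals.

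Next I record two scaling facts and a weak law that drive every remainder bound. By independence outside $\mathcal{N}_i$ and the moment bound, $\lambda_{\max}(Q_n)\le \sum_i\sum_{j\in\mathcal{N}_i}\|\mathrm{Cov}(X_iu_i,X_ju_j)\|\le K_0\sum_i N_i\le K_0\big(\sum_g (N^G_g)^2+\sum_h (N^H_h)^2\big)$. Invoking Assumption \ref{asmp:stochasticX}.3 for $C=G$ and $C=H$ bounds the right side by $K_0'\lambda_n$, so $Q_n$ has a bounded condition number and I may pass between $Q_n^{-1}(\cdot)$ and $Q_n^{-1/2}(\cdot)Q_n^{-1/2}$ at $O(1)$ cost. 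Separately, bounding $\sum_c (N^C_c)^2\le n\max_c N^C_c$ and using $\max_c N^C_c=o(\sqrt{\lambda_n})$ from Assumption \ref{asmp:stochasticX}.2 gives $\lambda_n\le\lambda_{\max}(Q_n)\le n\cdot o(\sqrt{\lambda_n})$, hence $\sqrt{\lambda_n}=o(n)$. Finally, $E\|\hat S_n-S_n\|^2\le K_0\sum_i N_i\le K_0'\lambda_n$, so $(\hat S_n-S_n)/n = O_p(\sqrt{\lambda_n}/n)=o_p(1)$, which with $\lambda_{\min}(S_n/n)\ge K_1$ yields $S_n\hat S_n^{-1}\xrightarrow{p}I_K$ and the rate $\hat\beta-\beta = \hat S_n^{-1}\sum_i X_iu_i = O_p(\sqrt{\lambda_n}/n)$.

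For the first conclusion I write $Q_n^{-1/2}S_n(\hat\beta-\beta) = Q_n^{-1/2}\sum_i X_iu_i + R_n$ with $R_n = Q_n^{-1/2}(S_n-\hat S_n)(\hat\beta-\beta)$, and bound $\|R_n\|\le \lambda_n^{-1/2}\,\|S_n-\hat S_n\|\,\|\hat\beta-\beta\| = \lambda_n^{-1/2}O_p(\sqrt{\lambda_n})O_p(\sqrt{\lambda_n}/n) = O_p(\sqrt{\lambda_n}/n)=o_p(1)$, so asymptotic normality follows from Step 1 by Slutsky. For variance consistency I expand $\hat u_i\hat u_j = u_iu_j - u_iX_j'(\hat\beta-\beta) - u_jX_i'(\hat\beta-\beta) + X_i'(\hat\beta-\beta)(\hat\beta-\beta)'X_j$ inside $\hat Q_n$, so $\hat Q_n=\tilde Q_n+\Delta_n$ with $\Delta_n$ collecting the terms linear and quadratic in $\hat\beta-\beta$. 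Each term of $\Delta_n$ is bounded by $\|\hat\beta-\beta\|$ (or its square) times a clustered moment sum of order $\sum_i N_i = O(\lambda_n)$, hence $\Delta_n=O_p(\|\hat\beta-\beta\|\lambda_n)$ and $\|Q_n^{-1/2}\Delta_nQ_n^{-1/2}\|\le \lambda_n^{-1}O_p(\|\hat\beta-\beta\|\lambda_n)=O_p(\|\hat\beta-\beta\|)=o_p(1)$; with $Q_n^{-1}\tilde Q_n\xrightarrow{p}I_K$ and the bounded condition number this gives $Q_n^{-1/2}\hat Q_nQ_n^{-1/2}\xrightarrow{p}I_K$. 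Writing $J_n:=Q_n^{-1/2}S_n$ and $\hat\Sigma_n:=\hat S_n^{-1}\hat Q_n\hat S_n^{-1}$, I factor $J_n\hat\Sigma_nJ_n' = [Q_n^{-1/2}(S_n\hat S_n^{-1})Q_n^{1/2}][Q_n^{-1/2}\hat Q_nQ_n^{-1/2}][Q_n^{1/2}(\hat S_n^{-1}S_n)Q_n^{-1/2}]$; the outer factors converge to $I_K$ since $S_n\hat S_n^{-1}\xrightarrow{p}I_K$ and conjugation by $Q_n^{\pm1/2}$ costs only the bounded condition number, and the middle factor converges to $I_K$ as shown, so $J_n\hat\Sigma_nJ_n'\xrightarrow{p}I_K$. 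Because $J_n$ also has bounded condition number, this is equivalent to $[S_n^{-1}Q_nS_n^{-1}]^{-1}\hat\Sigma_n=(J_n'J_n)\hat\Sigma_n\xrightarrow{p}I_K$, the claim.

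I expect the main obstacle to be the remainder control in the last paragraph, and specifically the need to carry the correct rate $\|\hat\beta-\beta\|=O_p(\sqrt{\lambda_n}/n)$ rather than the naive $O_p(n^{-1/2})$: under strong multi-way dependence $\lambda_n$ can far exceed $n$, so it is only the two scaling facts of Step 2 --- the bounded condition number of $Q_n$ and $\sqrt{\lambda_n}=o(n)$ --- that force both $R_n$ and $Q_n^{-1/2}\Delta_nQ_n^{-1/2}$ to vanish. Verifying that each cross term in $\Delta_n$ genuinely reduces to a clustered moment sum of order $\sum_i N_i$, so that the $\lambda_n^{-1}$ from the normalization exactly cancels, is the routine-but-delicate bookkeeping on which the argument rests.
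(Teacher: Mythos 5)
Your proposal is correct, and it follows the same skeleton as the paper's proof: apply Theorem \ref{thm:multivariate} to $W_i = X_i u_i$ (verifying its hypotheses exactly as you do), establish $S_n^{-1}\hat S_n \xrightarrow{p} I_K$ from Lemma-type variance bounds plus the rank condition, conclude normality by Slutsky, and then expand $\hat u_i \hat u_j$ into the infeasible CGM term plus remainders linear and quadratic in $\hat\beta-\beta$. Where you genuinely differ is the control of those remainders. The paper never states the rates you derive; instead it writes $X_i'(\hat\beta-\beta) = \left( X_i' S_n^{-1} Q_n^{1/2}\right)\left( Q_n^{-1/2} S_n (\hat\beta-\beta)\right)$, uses that the second factor is the $O_P(1)$ pivotal statistic, and then bounds each normalized remainder term directly via Markov, Minkowski and Holder inequalities, concluding from $\max_i N_i / n \rightarrow 0$ (quadratic term) and $\max_i N_i/\sqrt{\lambda_n} \rightarrow 0$ (cross term), both consequences of Assumption \ref{asmp:stochasticX}.2. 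You instead extract two global scaling facts --- the bounded condition number of $Q_n$ (which the paper also proves, in essentially the same way, from Assumption \ref{asmp:stochasticX}.3 and finite moments) and $\sqrt{\lambda_n} = o(n)$ --- to obtain the explicit rate $\|\hat\beta-\beta\| = O_p(\sqrt{\lambda_n}/n)$, after which every remainder is $O_p(\|\hat\beta-\beta\|)$ or smaller once normalized by $\lambda_n^{-1}$, because each clustered moment sum has expectation $O(\sum_i N_i) = O(\lambda_n)$ by Assumption \ref{asmp:stochasticX}.3. These are the same underlying estimates, but your packaging buys transparency: the rate statement makes explicit why strong dependence, where $\lambda_n$ can far exceed $n$, causes no trouble, whereas in the paper this is implicit in the choice of normalizations $1/n^2$ and $1/(n\sqrt{\lambda_n})$. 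Your final factorization through $J_n = Q_n^{-1/2}S_n$ is also more careful than the paper's one-line appeal to the continuous mapping theorem; it does require the bounded condition number of $S_n/n$, which you assert implicitly but which follows easily from Assumption \ref{asmp:stochasticX}.5 and the fourth-moment bounds.
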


Proposition \ref{prop:stochasticX} is useful for performing F tests on a subvector of $\beta$. The proof of Proposition \ref{prop:stochasticX} proceeds by applying Theorem \ref{thm:main} to $\sum_i X_i u_i$, then showing that $S_n^{-1} \hat{S}_n \xrightarrow{p} I_K$, which uses the rank condition of Assumption \ref{asmp:stochasticX}(e). It then remains to show that the remainder terms are asymptotically negligible. 

The practitioner's takeaway from Proposition \ref{prop:stochasticX} is that the existing CGM variance estimator can be used for valid inference with two-way clustering. 
The result provides the formal theoretical guarantee for using the estimator, under conditions that permit heterogeneity across clusters. 

Besides the application mentioned, Theorem \ref{thm:main} also has implications on the conditions required for valid inference when the random variable is two-way clustered in many other econometric models, including design-based settings and instrumental variables models. This theory is especially relevant for design-based settings where the researcher conditions on potential outcomes, so the random variable cannot be separately exchangeable by construction --- see \citet{yap2023design}, for instance. Inference for estimators based on moment conditions can be done by straightforward application of Theorem \ref{thm:main} as in linear regression. Practically, this paper has shown that the popular CGM estimator is robust in an environment without separate exchangeability, but practitioners should exercise caution when applying bootstrap methods to environments that are not separately exchangeable. While the results are presented for two-way clustering, they can be easily extended to clustering on three or more dimensions.

\appendix
\renewcommand{\thetheorem}{\thesection.\arabic{theorem}}
\setcounter{theorem}{0}
\section{Proof of Theorem 1} \label{sec:proof_thm1}
The proof strategy is as follows. I first prove Lemma \ref{lem:scalar_CLT}, which is a central limit theorem (CLT) for scalars. The proof of Lemma \ref{lem:scalar_CLT} relies on Lemmas \ref{lem:ross3.1} to \ref{lem:4th_moment}. Lemmas 2 to 4 derive an upper bound on the Wasserstein distance between a pivotal statistic and standard normal $Z$. Lemmas 5 to 7 then show that the derived upper bound is $o(1)$. With Lemma \ref{lem:scalar_CLT}, the multivariate CLT of Theorem \ref{thm:main} is obtained by using the Cramer-Wold device. The remainder of the proof proceeds in the following order: (i) introduce definitions and notation, (ii) state Lemma \ref{lem:scalar_CLT}, (iii) state and prove Lemmas \ref{lem:ross3.1} to \ref{lem:4th_moment}, (iv) prove Lemma \ref{lem:scalar_CLT}, then (v) complete the proof of Theorem \ref{thm:main}. 


The following definitions and notations are used throughout the proof. Let $d_W(X,Y)$ denote the Wasserstein distance between random variables $X$ and $Y$, so $d_W(X,Y) =0$ if and only if the distributions of $X$ and $Y$ are identical. The norms of functions are defined as the sup norm i.e., $||f|| = \sup_{x \in D} |f(x)|$. For vector $a$, $||a|| = (a'a)^{1/2}$ is the Euclidean norm, and for positive semi-definite matrix $A$ and $\lambda_{\max}(A)$ denoting the largest eigenvalue, $||A|| = \sqrt{\lambda_{\max}(A'A)}$ denotes the spectral norm, and $A^{1/2}$ denotes the symmetric matrix such that $A^{1/2} A^{1/2} = A$. $\sum_{i \in \mathcal{N}^G_g} \sum_{j \in \mathcal{N}^G_g}$ is abbreviated as $\sum_{i,j \in \mathcal{N}^G_g}$. The dependency neighborhood of $i$, $\mathcal{N}_i \subseteq \{ 1, \cdots, n \}$, is defined as the set of observations where $i \in \mathcal{N}_i$ and $X_i$ is independent of $\{ X_j \}_{j \ne \mathcal{N}_i}$, and $N_i := |\mathcal{N}_i|$ is the number of observations in $i$'s dependency neighborhood. $1[A]$ is an indicator function that takes value 1 if $A$ is true and 0 otherwise. In the rest of this proof, $X_i$ denotes a scalar random variable while $W_i \in \mathbb{R}^K$ as stated in the main text is a random vector. Denote the variance of the sum of the scalar random variable $X_i$ as $\sigma_n^2 := Var \left( \sum_i X_i \right)$. We are interested in the asymptotic distribution of $(1/\sigma_n) \sum_i X_i$. 

\begin{assumption} \label{asmp:scalar_CLT}
For $C \in \{G ,H \}$, there exists $K_0 < \infty$ such that:
\begin{enumerate}[label=(\alph*)]
    \item $E[X_i]=0$ and $E[X_i^4] \leq K_0 <\infty$ for all $i$.
    \item $\frac{1}{\sigma_n^2} \max_c \left( N^C_c \right)^2 \rightarrow 0$
    \item $\frac{1}{\sigma_n^2} \sum_c \left( N^C_c \right)^2  \leq K_0 < \infty$
    \item $X_i \indep \{ X_j \}_{j \notin \mathcal{N}_i}$.
    \item For observations $i,j,k \in \mathcal{N}_i,l \in \mathcal{N}_j$, if $(\mathcal{N}_i \cup \mathcal{N}_k) \cap (\mathcal{N}_j \cup \mathcal{N}_l) = \emptyset$, then $Cov(X_i X_k, X_j X_l) =0$.
\end{enumerate}  
\end{assumption}

\begin{lemma} \label{lem:scalar_CLT}
Under Assumption \ref{asmp:scalar_CLT}, $(1/\sigma_n) \sum_i X_i \xrightarrow{d} N(0,1)$, where $\sigma_n^2 := Var\left( \sum_i X_i \right)$. Further, using feasible estimator $\hat{\sigma}_n^2 := \sum_i \sum_{j \in \mathcal{N}_i}  X_i X_j$, $\hat{\sigma}_n^2/ \sigma_n^2 \xrightarrow{p} 1$.
\end{lemma}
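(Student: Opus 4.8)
The plan is to prove the CLT by Stein's method with dependency neighborhoods and then obtain variance consistency as a byproduct of the same moment estimates. Write $R := (1/\sigma_n)\sum_{i=1}^n \omega_i X_i$ and let $Z \sim N(0,1)$. The first step is to establish a nonasymptotic bound on the Wasserstein distance $d_W(R,Z)$. Adapting the dependency-neighborhood bound of \citet{ross2011fundamentals} to the weighted sum, I expect a bound of the form
\begin{align*}
    d_W(R,Z) \leq \frac{1}{\sigma_n^3}\sum_{i=1}^n \left| \sum_{j,k \in \mathcal{N}_i} \omega_i\omega_j\omega_k E[X_i X_j X_k] \right| + \frac{\sqrt{2/\pi}}{\sigma_n^2}\sqrt{Var\left(\sum_{i=1}^n\sum_{j\in\mathcal{N}_i}\omega_i\omega_j X_i X_j\right)}.
\end{align*}
It then suffices to show both terms are $o(1)$, since $d_W(R,Z)\to 0$ implies convergence in distribution.

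For the third-moment term, the key device is a Hölder bound that keeps the moment factor in the $L^\infty$ norm and the weight/neighborhood factor in the $L^1$ norm. Since $|E[X_iX_jX_k]| \leq \max_m E[|X_m|^3] \leq K_0^{3/4}$ is bounded by Assumption \ref{asmp:scalar_CLT}.1, the term is controlled by $\sigma_n^{-3}\sum_i\sum_{j,k\in\mathcal{N}_i}|\omega_i\omega_j\omega_k| = \sigma_n^{-3}\sum_i|\omega_i|(\sum_{j\in\mathcal{N}_i}|\omega_j|)^2$. Because $\mathcal{N}_i \subseteq \mathcal{N}^G_{g(i)}\cup\mathcal{N}^H_{h(i)}$ under Assumption \ref{asmp:scalar_CLT}.4, one factor $\sum_{j\in\mathcal{N}_i}|\omega_j|$ is at most $2\max_c\sum_{k\in\mathcal{N}^C_c}|\omega_k|$, which I factor out; the remaining double sum $\sum_i\sum_{j\in\mathcal{N}_i}|\omega_i\omega_j| = \sum_i\sum_j A_{ij}|\omega_i\omega_j|$ is at most $2K_0\sigma_n^2$ by Assumption \ref{asmp:scalar_CLT}.3, splitting each shared-cluster pair across the two dimensions. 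This leaves the bound $\sigma_n^{-1}\max_c\sum_{k\in\mathcal{N}^C_c}|\omega_k|$ up to constants, whose square is exactly the quantity sent to zero by Assumption \ref{asmp:scalar_CLT}.2.

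The main obstacle is the second term, the variance of the quadratic form $\sum_i\sum_{j\in\mathcal{N}_i}\omega_i\omega_j X_iX_j$, which requires controlling fourth-moment cross terms. Expanding this variance produces covariances $Cov(X_iX_j, X_kX_l)$ that vanish unless the index set $\{i,j,k,l\}$ is connected through the dependency relation, so that every surviving quadruple lives within a bounded number of clusters. I plan to bound these covariances by $\max_m E[X_m^4] \leq K_0$ via Cauchy--Schwarz and then carry out the same $L^\infty$-moment and $L^1$-weight accounting as above, using Assumption \ref{asmp:scalar_CLT}.4 to reduce neighborhoods to single-dimension cluster sums and Assumptions \ref{asmp:scalar_CLT}.2--\ref{asmp:scalar_CLT}.3 to conclude $Var(\cdot)/\sigma_n^4 = o(1)$. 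The combinatorial bookkeeping of which quadruples survive, and ensuring no term is undercounted when two indices share clusters on both dimensions, is the most delicate part; I would isolate it in a dedicated fourth-moment lemma (Lemma \ref{lem:4th_moment}). Combining the two estimates gives $d_W(R,Z)\to 0$, hence the CLT.

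Finally, variance consistency falls out of the same estimate. Since $A_{ij}=0$ forces $X_i \indep X_j$ and hence $E[X_iX_j]=0$, the cross-moments vanish for every $j\notin\mathcal{N}_i$, so the estimator is exactly unbiased: $E[\hat\sigma_n^2] = \sum_i\sum_{j\in\mathcal{N}_i}\omega_i\omega_j E[X_iX_j] = \sum_i\sum_j\omega_i\omega_j E[X_iX_j] = \sigma_n^2$. Because $\hat\sigma_n^2$ is precisely the quadratic form whose variance I already bounded, Chebyshev's inequality yields $P(|\hat\sigma_n^2/\sigma_n^2 - 1| > \varepsilon) \leq Var(\hat\sigma_n^2)/(\varepsilon^2\sigma_n^4) \to 0$, i.e. $\hat\sigma_n^2/\sigma_n^2 \xrightarrow{p} 1$.
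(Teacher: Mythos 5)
Your overall architecture matches the paper's exactly: Stein's method with the dependency-neighborhood Wasserstein bound of \citet{ross2011fundamentals}, an $L^\infty$-moment/$L^1$-weight H\"older accounting for the third- and fourth-moment terms, and Chebyshev plus exact unbiasedness of $\hat\sigma_n^2$ for variance consistency. However, your third-moment step has a genuine gap. After bounding $|E[X_iX_jX_k]|\le \max_m E|X_m|^3$ for \emph{every} triple $j,k\in\mathcal{N}_i$, you assert that $\sum_i\sum_{j\in\mathcal{N}_i}|\omega_i\omega_j| = \sum_i\sum_j A_{ij}|\omega_i\omega_j| \le 2K_0\sigma_n^2$ by Assumption \ref{asmp:scalar_CLT}.3. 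That identity is false: $A_{ij}=1$ implies $j\in\mathcal{N}_i$, but the converse fails whenever clustering is conservative, which the setting explicitly allows ($A_{ij}$ may be $0$ for pairs sharing a cluster), and Assumption \ref{asmp:scalar_CLT}.3 controls only the sum carrying the $A_{ij}$ factor. The unrestricted neighborhood sum can be of strictly larger order than $\sigma_n^2$. Concretely, take $\omega_i=1$, iid data (so $A_{ij}=1[i=j]$ and $\sigma_n^2\asymp n$), singleton $H$ clusters, and $G$ clusters of size $N^G_g \asymp \sqrt{n}/\log n$: every part of Assumption \ref{asmp:scalar_CLT} holds, yet $\sum_g\sum_{i,j\in\mathcal{N}^G_g}|\omega_i\omega_j| = \sum_g (N^G_g)^2 \asymp n^{3/2}/\log n \gg \sigma_n^2$, and your final bound $\left(\sigma_n^{-1}\max_c\sum_{k\in\mathcal{N}^C_c}|\omega_k|\right)\left(\sigma_n^{-2}\sum_i\sum_{j\in\mathcal{N}_i}|\omega_i\omega_j|\right) \asymp (1/\log n)(\sqrt{n}/\log n)\to\infty$. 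So your chain of inequalities fails to deliver $o(1)$ even though the third-moment term itself is $o(1)$ in this example; the bound is too lossy because you discarded the dependence indicators before summing.

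The repair is exactly where the paper is more careful (its Lemma \ref{lem:3rd_moment}): before applying H\"older, use that $E[X_iX_jX_k]=0$ unless at least one pair among $\{i,j,k\}$ is \emph{actually} dependent, i.e. insert the factor $(A_{ij}+A_{ik}+A_{jk})$ into the triple sum, and then factor the cluster weight sum out of the index that does \emph{not} carry the indicator. This leaves $\left(\max_g\sum_{k\in\mathcal{N}^G_g}|\omega_k|\right)\sum_g\sum_{i,j\in\mathcal{N}^G_g}A_{ij}|\omega_i\omega_j|$, to which Assumption \ref{asmp:scalar_CLT}.3 applies directly while Assumption \ref{asmp:scalar_CLT}.2 kills the first factor. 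The same discipline is needed in your fourth-moment lemma: you correctly observe that surviving quadruples must be linked by actual dependence, but the bookkeeping must keep one $A$ indicator attached to a pair throughout (as in the paper's $A_{ij}+A_{il}+A_{kj}+A_{kl}$ decomposition in Lemma \ref{lem:4th_moment}) rather than reverting to unrestricted neighborhood sums. Your Wasserstein bound and your consistency step (unbiasedness since $E[X_iX_j]=0$ for $j\notin\mathcal{N}_i$, then Chebyshev reusing the fourth-moment variance bound) coincide with the paper's and are fine.
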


\begin{lemma} \label{lem:ross3.1}
(Theorem 3.1 of \citet{ross2011fundamentals}) If $R$ is a random variable, $Z$ has a standard normal distribution, and we define the family of functions $\mathcal{F} = \{ f: ||f||, ||f^{\prime \prime}|| \leq 2,  ||f^{\prime}|| \leq \sqrt{2\pi} \}$, then $d_W(R,Z) \leq \sup_{f \in \mathcal{F}} |E[f'(R) - Rf(R)]|$.
\end{lemma}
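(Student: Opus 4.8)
The plan is to prove this via the generator/characterization route of Stein's method, exactly as in \citet{ross2011fundamentals}. The starting point is the dual representation of the Wasserstein distance as a supremum over Lipschitz test functions, $d_W(R,Z) = \sup_h |E[h(R)] - E[h(Z)]|$, where the supremum ranges over all $h$ with $||h'|| \leq 1$. The aim is to convert each such test-function comparison into a quantity of the form $E[f'(R) - R f(R)]$ for a function $f$ lying in the tractable family $\mathcal{F}$, so that the comparison is dominated by $\sup_{f \in \mathcal{F}} |E[f'(R) - R f(R)]|$ uniformly in $h$.

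The key device is the Stein equation. Fixing a $1$-Lipschitz $h$, I consider the first-order linear ODE
\begin{equation*}
    f'(x) - x f(x) = h(x) - E[h(Z)],
\end{equation*}
which admits the explicit bounded solution
\begin{equation*}
    f_h(x) = e^{x^2/2} \int_{-\infty}^x \left( h(y) - E[h(Z)] \right) e^{-y^2/2} \, dy = -e^{x^2/2} \int_x^\infty \left( h(y) - E[h(Z)] \right) e^{-y^2/2} \, dy,
\end{equation*}
the two forms being equal because $\int_{-\infty}^\infty (h(y) - E[h(Z)]) e^{-y^2/2}\,dy = 0$. Substituting $R$ for $x$ and taking expectations collapses the left side of the Stein equation to $E[f_h'(R) - R f_h(R)]$ and the right side to $E[h(R)] - E[h(Z)]$, yielding the exact identity $E[h(R)] - E[h(Z)] = E[f_h'(R) - R f_h(R)]$ for every $1$-Lipschitz $h$.

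The crux of the argument — and the step I expect to be the main obstacle — is to verify that $f_h$ lands in $\mathcal{F}$, i.e. to establish the uniform bounds $||f_h|| \leq 2$, $||f_h''|| \leq 2$, and $||f_h'|| \leq \sqrt{2\pi}$. These follow from a careful analysis of the explicit integral formula above, together with differentiating the Stein equation to express $f_h''$ through $f_h$, $f_h'$, and $h'$. The classical Gaussian estimates give $||f_h|| \leq 2||h'||$, $||f_h'|| \leq \sqrt{2/\pi}\,||h'||$, and $||f_h''|| \leq 2||h'||$; since $||h'|| \leq 1$ and $\sqrt{2/\pi} \leq \sqrt{2\pi}$, each quantity is dominated by the corresponding threshold defining $\mathcal{F}$, so $f_h \in \mathcal{F}$. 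The bound on $||f_h'||$ is the delicate one, as it requires estimating $e^{x^2/2}\int_x^\infty (\cdot) e^{-y^2/2}\,dy$ uniformly over $x$ while exploiting the mean-zero cancellation of $h - E[h(Z)]$.

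With $f_h \in \mathcal{F}$ in hand, the conclusion is immediate: for any $1$-Lipschitz $h$,
\begin{equation*}
    |E[h(R)] - E[h(Z)]| = |E[f_h'(R) - R f_h(R)]| \leq \sup_{f \in \mathcal{F}} |E[f'(R) - R f(R)]|.
\end{equation*}
Taking the supremum over all such $h$ on the left, while the right-hand side does not depend on $h$, gives $d_W(R,Z) \leq \sup_{f \in \mathcal{F}} |E[f'(R) - R f(R)]|$, as claimed. I would note in passing that the family $\mathcal{F}$ uses the looser constant $\sqrt{2\pi}$, whereas the sharper classical bound $\sqrt{2/\pi}$ would also suffice and merely enlarges the slack.
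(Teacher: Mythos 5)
Your proposal is correct and is essentially the same argument as the paper's: the paper proves this lemma purely by citing \citet{ross2011fundamentals} (Theorem 3.1), and your Stein-equation construction with the classical bounds $\|f_h\| \leq 2\|h'\|$, $\|f_h'\| \leq \sqrt{2/\pi}\,\|h'\|$, $\|f_h''\| \leq 2\|h'\|$ is precisely the proof of that cited theorem. Your closing remark is also apt: the paper's $\sqrt{2\pi}$ (versus $\sqrt{2/\pi}$ in Ross, and in the paper's own later use in Lemma \ref{lem:modified_ross}) only enlarges the family $\mathcal{F}$, so the stated inequality remains valid.
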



The proofs of Lemmas \ref{lem:ross3.3} and \ref{lem:modified_ross} follow \citet{ross2011fundamentals} Theorem 3.6 up to Equations (3.11) and (3.12).

\begin{lemma} \label{lem:ross3.3}
Let $X_1, \cdots, X_n$ be random variables such that $E[X_i] =0, \sigma_n^2 = Var(\sum_i X_i)$, and define $R = \sum_i X_i/ \sigma_n$. If $R_i := \sum_{j \notin \mathcal{N}_i} X_j/ \sigma_n$, then, for all $f \in \mathcal{F}$, 
\begin{align*}
    E[Rf(R)] &= E\left[ \frac{1}{\sigma_n} \sum_i X_i (f(R) - f(R_i) - (R-R_i) f'(R)) \right] + E\left[ \frac{1}{\sigma_n} \sum_i X_i (R-R_i) f'(R) \right]
\end{align*}
\end{lemma}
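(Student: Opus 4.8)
The plan is to show that the stated identity, after an exact cancellation, reduces to the single assertion that $E[X_i f(R_i)] = 0$ for every $i$, which then follows directly from the definition of the dependency neighborhood together with $E[X_i] = 0$. First I would combine the two expectations on the right-hand side. Inside the sum over $i$, the term $-(R - R_i)f'(R)$ from the first expectation is cancelled exactly by the term $+(R - R_i)f'(R)$ from the second, so the right-hand side collapses to
\begin{align*}
E\left[ \frac{1}{\sigma_n} \sum_{i=1}^n X_i \bigl( f(R) - f(R_i) \bigr) \right].
\end{align*}

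Next I would rewrite the left-hand side using the definition $\sigma_n R = \sum_{i=1}^n X_i$, which gives $E[Rf(R)] = E\bigl[ \frac{1}{\sigma_n} \sum_{i=1}^n X_i f(R) \bigr]$. Subtracting this common term, the identity to be established reduces to the single claim
\begin{align*}
E\left[ \frac{1}{\sigma_n} \sum_{i=1}^n X_i f(R_i) \right] = 0,
\end{align*}
and by linearity of expectation it suffices to verify $E[X_i f(R_i)] = 0$ for each fixed $i$.

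The final step invokes the defining property of $R_i$. Since $R_i = \sum_{j \notin \mathcal{N}_i} X_j / \sigma_n$ is a function only of $\{ X_j : j \notin \mathcal{N}_i \}$, and the dependency neighborhood is constructed precisely so that $X_i \indep \{ X_j \}_{j \notin \mathcal{N}_i}$, the variables $X_i$ and $f(R_i)$ are independent. Hence $E[X_i f(R_i)] = E[X_i]\, E[f(R_i)] = 0$ because $E[X_i] = 0$, which completes the argument.

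There is no substantive obstacle: the content is entirely the telescoping cancellation followed by the independence argument supplied by the dependency neighborhood. The only points requiring care are to confirm that the cancellation holds term-by-term inside the sum over $i$ (so that no cross terms survive), and to ensure the factorization $E[X_i f(R_i)] = E[X_i]\, E[f(R_i)]$ is legitimate; the latter is unproblematic because the relevant $f$ in Lemma \ref{lem:ross3.1} is bounded, guaranteeing integrability.
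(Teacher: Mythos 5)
Your proposal is correct and follows essentially the same route as the paper's own proof: cancel the $(R-R_i)f'(R)$ terms across the two expectations, identify $E\bigl[\frac{1}{\sigma_n}\sum_i X_i f(R)\bigr]$ with $E[Rf(R)]$, and kill the remaining term via $E[X_i f(R_i)] = E[X_i]\,E[f(R_i)] = 0$, using the independence of $X_i$ and $R_i$ built into the definition of the dependency neighborhood. Your added remark on integrability (boundedness of $f$) is a minor point of extra care, not a different argument.
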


\begin{proof}
Start from right-hand side: 
\begin{align*}
    E& \left[ \frac{1}{\sigma_n} \sum_i X_i (f(R) - f(R_i) - (R-R_i) f'(R)) \right] + E\left[ \frac{1}{\sigma_n} \sum_i X_i (R-R_i) f'(R) \right] \\
    &= E \left[ \frac{1}{\sigma_n} \sum_i X_i (f(R) - f(R_i)) \right] = E \left[ \frac{1}{\sigma_n} \sum_i X_i f(R)  \right] + E \left[ \frac{1}{\sigma_n} \sum_i X_i f(R_i) \right] \\
    &= E \left[ \frac{1}{\sigma_n} \sum_i X_i f(R)  \right] =E[Rf(R)]
\end{align*}
The first equality in the final line comes from the fact that $R_i$ is independent of $X_i$ based on how dependency neighborhoods are defined. Hence, $E[X_i f(R_i)]=0$. 
\end{proof}

\begin{lemma} \label{lem:modified_ross}
Let $X_1, \cdots, X_n$ be random variables such that, $E[X_i] =0, \sigma_n^2 = Var(\sum_i X_i)$, and define $R = \sum_i X_i/ \sigma_n$. Let the collection $(X_1, \cdots, X_n)$ have dependency neighborhoods $\mathcal{N}_i$, $i = 1, \cdots, n$. Then for $Z$ a standard normal random variable,
\begin{equation} \label{eqn:dW_bound}
    d_W(R,Z) \leq \frac{1}{\sigma_n^3} \sum_i \sum_{j,k \in \mathcal{N}_i} E \left[  |X_i| X_j X_k  \right] + \frac{\sqrt{2}}{\sqrt{\pi} \sigma_n^2} \sqrt{Var\left( \sum_i \sum_{j \in \mathcal{N}_i} X_i X_j \right)}
\end{equation}
\end{lemma}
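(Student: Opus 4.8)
The plan is to run the dependency-neighborhood form of Stein's method, using Lemmas~\ref{lem:ross3.1} and \ref{lem:ross3.3} as scaffolding and then controlling the two pieces that the decomposition produces.

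First I would reduce to a Stein expression: by Lemma~\ref{lem:ross3.1} it suffices to bound $|E[f'(R) - Rf(R)]|$ for an arbitrary $f \in \mathcal{F}$, so that only $\|f'\|$ and $\|f''\|$ enter. Writing $\Delta_i := R - R_i = \sigma_n^{-1}\sum_{j \in \mathcal{N}_i} X_j$ and $S := \sigma_n^{-1}\sum_i X_i \Delta_i = \sigma_n^{-2}\sum_i \sum_{j \in \mathcal{N}_i} X_i X_j$, Lemma~\ref{lem:ross3.3} gives the exact identity $E[Rf(R)] = T_1 + E[f'(R)S]$ with remainder piece $T_1 := E[\sigma_n^{-1}\sum_i X_i (f(R) - f(R_i) - \Delta_i f'(R))]$. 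Hence $E[f'(R) - Rf(R)] = E[f'(R)(1-S)] - T_1$, and I would bound the two contributions separately: $E[f'(R)(1-S)]$ will produce the variance summand and $T_1$ the third-moment summand.

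For the variance term I would first establish $E[S]=1$. Since $E[X_i]=0$ and $X_i$ is independent of $X_j$ for $j \notin \mathcal{N}_i$, the cross moment $E[X_iX_j]$ vanishes off the dependency neighborhood, so $\sum_i \sum_{j \in \mathcal{N}_i} E[X_iX_j] = \sum_{i,j} E[X_iX_j] = Var(\sum_i X_i) = \sigma_n^2$ and $E[S]=1$. Thus $E[1-S]=0$, and Cauchy--Schwarz gives $E|1-S| = E|S - E[S]| \leq \sqrt{Var(S)}$. Combining $|E[f'(R)(1-S)]| \leq \|f'\|\,E|1-S|$ with the sup-norm bound on $f'$ (which supplies the constant $\sqrt{2/\pi}$) and $Var(S) = \sigma_n^{-4}Var(\sum_i \sum_{j \in \mathcal{N}_i} X_iX_j)$ reproduces the second summand of the bound.

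The delicate part is the third-moment term, and this is where I expect the main obstacle. The crude route uses $|f(R) - f(R_i) - \Delta_i f'(R)| \leq \tfrac12\|f''\|\Delta_i^2$ and $\|f''\| \leq 2$ to get $|T_1| \leq \sigma_n^{-3}\sum_i E|X_i(\sum_{j \in \mathcal{N}_i} X_j)^2|$, which places the absolute value outside the expectation and is therefore weaker than the stated first summand $\sigma_n^{-3}\sum_i |\sum_{j,k \in \mathcal{N}_i} E[X_iX_jX_k]|$, in which the expectation sits inside. To target the sharp form I would localize the second derivative at the independent point $R_i$: because $X_i\Delta_i^2$ depends only on the neighborhood variables $\{X_j\}_{j \in \mathcal{N}_i}$ while $R_i$ is independent of them, the leading quadratic contribution factors as $\tfrac12 E[X_i\Delta_i^2]\,E[f''(R_i)]$, and $|E[f''(R_i)]| \leq \|f''\| \leq 2$ can then be pulled outside the $(j,k)$-sum; since $E[X_i\Delta_i^2] = \sigma_n^{-2}\sum_{j,k \in \mathcal{N}_i} E[X_iX_jX_k]$, this yields exactly $\sigma_n^{-3}\sum_i |\sum_{j,k \in \mathcal{N}_i} E[X_iX_jX_k]|$. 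The crux, and the obstacle, is to carry out this factorization while controlling the remainder created by replacing the random evaluation point of $f''$ with $R_i$, without reintroducing a coarser absolute-value-inside term of the same order; this is the step that departs from the textbook Ross bound. I note that even the crude form suffices for every downstream use in the paper, since both collapse to the same Hölder estimate $\sigma_n^{-3}\max_m E|X_m|^3 \sum_i N_i^2$. Collecting the two bounds and taking the supremum over $f \in \mathcal{F}$ then gives (\ref{eqn:dW_bound}).
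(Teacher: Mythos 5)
Your scaffolding (Lemmas \ref{lem:ross3.1} and \ref{lem:ross3.3}, the split into a Taylor-remainder term and a term in $1 - \frac{1}{\sigma_n}\sum_i X_i(R-R_i)$) and your treatment of the variance summand are exactly the paper's argument, and that part is fine. The genuine gap is in your proposed repair of the third-moment term. The factorization $E\bigl[X_i(R-R_i)^2 f''(R_i)\bigr] = E\bigl[X_i(R-R_i)^2\bigr]\,E\bigl[f''(R_i)\bigr]$ requires $R_i$ to be independent of the whole neighborhood vector $(X_j)_{j\in\mathcal{N}_i}$, but the definition of a dependency neighborhood only gives $X_i \indep \{X_j\}_{j\notin\mathcal{N}_i}$, i.e., $X_i \indep R_i$. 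In the two-way clustering application this distinction is precisely the problem: if $g(j)=g(i)$ while $h(j)=h(k)$ for some $k$ with $g(k)\ne g(i)$ and $h(k)\ne h(i)$, then $j\in\mathcal{N}_i$ and $k\notin\mathcal{N}_i$, yet $X_j$ may be arbitrarily dependent with $X_k$, so $R-R_i$ and $R_i$ are dependent. The paper is explicit that multi-way clustering admits no independent blocks --- that is the stated reason for using Stein's method at all --- so any step that implicitly restores block independence cannot be valid here. Since you also leave unresolved the error from replacing the random evaluation point of $f''$ by $R_i$, the proposal does not establish the stated bound.

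You should know, however, that your diagnosis of the soft spot is accurate, and the paper's own proof does not resolve it either: it passes directly from the Taylor remainder to $\frac{\|f''\|}{2\sigma_n}\bigl|\sum_{i=1}^n E[X_i(R-R_i)^2]\bigr|$, with the expectations inside the absolute value, whereas the Lagrange remainder involves $f''(\xi_i)$ at a random point correlated with $X_i(R-R_i)^2$ and therefore justifies only the cruder bound $\frac{\|f''\|}{2\sigma_n}\sum_{i=1}^n E\bigl[|X_i|(R-R_i)^2\bigr]$. One caution on your fallback remark: the crude form does suffice in the proof-sketch setting ($\omega_i=1$, $A_{ij}=1$ within clusters), but not under the general Assumption \ref{asmp:scalar_CLT}.3, because the proof of Lemma \ref{lem:3rd_moment} inserts the indicators $(A_{ij}+A_{jk}+A_{ik})$ by using that $E[X_iX_jX_k]=0$ when no pair is dependent --- a step that needs the expectation inside the absolute value. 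With $E|X_iX_jX_k|$ instead, independent triples no longer drop out, and the argument would require the stronger condition $\frac{1}{\sigma_n^2}\sum_c\bigl(\sum_{i\in\mathcal{N}^C_c}|\omega_i|\bigr)^2 \leq K_0$ (all $A_{ij}$ set to one). So either the sharp first summand must be proved by a genuinely different argument, or the lemma should be weakened to the crude form and the downstream assumptions strengthened accordingly.
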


\begin{proof}
Due to Lemma \ref{lem:ross3.1}, to bound $d_W(R,Z)$ from above, it is sufficient to bound $|E[f'(R) - Rf(R)]|$, where $||f||, ||f^{\prime \prime}|| \leq 2,  ||f^{\prime}|| \leq \sqrt{2/ \pi}$. Define $R_i := \sum_{j \notin \mathcal{N}_i} X_j/ \sigma_n$, so $X_i$ is independent of $R_i$. 
\begin{align*}
    |E[&f'(R) - Rf(R)]| = |E[f'(R)] - E[Rf(R)]| \\
    &\leq \left|E[f'(R)] - E\left[ \frac{1}{\sigma_n} \sum_i X_i (f(R) - f(R_i) - (R-R_i) f'(R)) \right] - E\left[ \frac{1}{\sigma_n} \sum_i X_i (R-R_i) f'(R) \right] \right| \\
    &\leq \left| E\left[ \frac{1}{\sigma_n} \sum_i X_i(f(R) - f(R_i) - (R-R_i) f'(R))\right] \right| + \left| E\left[ f'(R) \left( 1 - \frac{1}{\sigma_n} \sum_i X_i(R-R_i) \right) \right] \right|
\end{align*}

The first inequality applies Lemma \ref{lem:ross3.3}, and the second inequality applies the triangle inequality. Consequently, it is sufficient to show that the first term is bounded by the corresponding first term of Equation (\ref{eqn:dW_bound}), and the second term is bounded by the corresponding second term.

Consider the first term.  By Taylor expansion of $f(R_i)$ around $f(R)$, and the triangle inequality, the term that generates the third moment is:
\begin{align*}
    \Bigg| E &\left[ \frac{1}{\sigma_n} \sum_i X_i(f(R) - f(R_i) - (R-R_i) f'(R))\right] \Bigg| \leq \frac{||f^{\prime \prime}||}{2 \sigma_n} \left| \sum_i E [|X_i| (R- R_i)^2] \right| \\
    &= \frac{1}{\sigma_n^3} \sum_i E\left[|X_i| \left( \sum_{j \in \mathcal{N}_i} X_j \right)^2 \right]= \frac{1}{\sigma_n^3} \sum_i \sum_{j,k \in \mathcal{N}_i} E[|X_i| X_j X_k] 
\end{align*} 
Turning now to the second term,
\begin{align*}
    & \left| E\left[ f'(R) \left( 1 - \frac{1}{\sigma_n} \sum_i X_i (R-R_i) \right) \right] \right|  \\
    &\leq \frac{||f'||}{\sigma_n^2} E \left| \sigma_n^2 - \sum_i X_i \left( \sum_{j \in N_i} X_j \right)  \right| \leq \frac{||f'||}{\sigma_n^2} E\left[ \left(  \sigma_n^2 - \sum_i X_i \left( \sum_{j \in N_i} X_j \right) \right)^2 \right]^{1/2} 1^{1/2} \\
    &\leq \frac{\sqrt{2}}{\sqrt{\pi} \sigma_n^2} \sqrt{Var\left( \sum_i \sum_{j \in N_i} X_i X_j \right)}
\end{align*}

\end{proof}

\begin{lemma} \label{lem:AMGM}
$E[|X_i X_j X_k|] \leq \max_m E[|X_m|^3]$, $E[|X_i X_j X_k X_l|] \leq \max_m E[|X_m|^4]$, and $|E[X_i X_k] E[X_j X_l]| \leq \max_m E[|X_m|^4]$.
\end{lemma}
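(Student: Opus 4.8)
The plan is to handle each of the three bounds with a standard convexity inequality and then collapse a product (or average) of moments to a single maximal moment; no clustering structure or probabilistic subtlety enters, so this is essentially a bookkeeping lemma used to feed the moment estimates in Lemma \ref{lem:modified_ross}.

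For the first bound I would apply the AM--GM (Young) inequality pointwise: for nonnegative reals $a,b,c$ one has $abc \le (a^3+b^3+c^3)/3$. Setting $a=|X_i|$, $b=|X_j|$, $c=|X_k|$ and taking expectations gives $E[|X_i X_j X_k|] \le \tfrac{1}{3}\big(E[|X_i|^3]+E[|X_j|^3]+E[|X_k|^3]\big)$, and since an average of three terms is at most their maximum, the right-hand side is bounded by $\max_m E[|X_m|^3]$. The second bound is identical, using the four-term version $abcd \le (a^4+b^4+c^4+d^4)/4$ and then bounding the average of four fourth moments by $\max_m E[|X_m|^4]$.

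For the third bound I would first control each factor by Cauchy--Schwarz: $|E[X_i X_k]| \le E[|X_i X_k|] \le (E[X_i^2])^{1/2}(E[X_k^2])^{1/2}$, and likewise for $E[X_j X_l]$. Multiplying the two yields $|E[X_i X_k]\,E[X_j X_l]| \le \big(E[X_i^2]\,E[X_j^2]\,E[X_k^2]\,E[X_l^2]\big)^{1/2}$. Then the power-mean (Jensen) inequality $E[X_m^2] \le (E[X_m^4])^{1/2}$ upgrades each second moment to a fourth moment, leaving the geometric mean $\big(E[X_i^4]\,E[X_j^4]\,E[X_k^4]\,E[X_l^4]\big)^{1/4}$ of four fourth moments, which is bounded above by $\max_m E[|X_m|^4]$.

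There is no genuine obstacle here; the only point that requires care is matching exponents so that the homogeneity is correct—each application should produce a symmetric mean (arithmetic in the first two cases, geometric in the third) of \emph{like}-powered moments, which is precisely what permits the clean reduction to a single $\max_m E[|X_m|^p]$ rather than to a product of mismatched powers.
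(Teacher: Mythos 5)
Your proposal is correct, and for the first two bounds it is identical to the paper's argument: pointwise AM--GM ($abc \le (a^3+b^3+c^3)/3$ and its four-term analogue), then bounding the arithmetic mean of like-powered moments by their maximum. For the third bound you take a mildly different, equally elementary route. You bound each factor separately by Cauchy--Schwarz, $|E[X_i X_k]| \le (E[X_i^2])^{1/2}(E[X_k^2])^{1/2}$, multiply, and then upgrade second moments to fourth moments via $E[X_m^2] \le (E[X_m^4])^{1/2}$, ending with the geometric mean $\left(E[X_i^4]\,E[X_j^4]\,E[X_k^4]\,E[X_l^4]\right)^{1/4}$. The paper instead applies the scalar inequality $2|ab| \le a^2+b^2$ directly to the pair $\left(E[X_i X_k],\, E[X_j X_l]\right)$, then uses $E[X_i X_k]^2 \le E[X_i^2 X_k^2]$ and a further AM--GM step $E[X_i^2 X_k^2] \le \tfrac{1}{2}(E[X_i^4]+E[X_k^4])$, ending with an arithmetic mean of the four fourth moments. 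Since both the geometric and the arithmetic mean of nonnegative numbers are dominated by their maximum, the two chains terminate in the same bound with the same generality; the difference is purely one of bookkeeping, and your version is, if anything, slightly more streamlined in that it reuses Cauchy--Schwarz rather than introducing a separate scalar inequality.
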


\begin{proof}
By the arithmetic mean --- geometric mean (AM-GM) inequality, 
$$E|X_i X_j X_k| \leq \frac{1}{3} \left( E|X_i|^3 + E|X_j|^3 + E|X_k|^3 \right) \leq \max_m E[|X_m|^3]$$
A similar argument yields $E[|X_i X_j X_k X_l|] \leq \max_m E[|X_m|^4]$. For the final result, first observe that $E[X_i X_k]^2 \pm 2 E[X_i X_k] E[X_j X_l] + E[X_j X_l]^2 = (E[X_i X_k] \pm E[X_j X_l])^2 \geq 0$. Hence,
\begin{align*}
    |E[X_i X_k] E[X_j X_l]| &\leq \frac{1}{2} (E[X_i X_k]^2 + E[X_j X_l]^2 ) \leq \frac{1}{2} (E[X_i^2 X_k^2] + E[X_j^2 X_l^2] ) \\
    &\leq \frac{1}{4} (E[X_i^4] + E[X_j^4] + E[X_k^4] + E[X_l^4]) \leq \max_m E[X_m^4]
\end{align*}
\end{proof}

\begin{lemma} \label{lem:3rd_moment}
Under Assumption \ref{asmp:scalar_CLT}, $\frac{1}{\sigma_n^3} \sum_i \sum_{j,k \in \mathcal{N}_i} E \left[  |X_i| X_j X_k  \right] \rightarrow 0$.
\end{lemma}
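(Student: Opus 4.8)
The plan is to bound the left-hand side by pulling the absolute value inside, discarding those triples whose third moment vanishes by independence, and matching what survives against Assumptions \ref{asmp:scalar_CLT}.2 and \ref{asmp:scalar_CLT}.3.

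First, since the weights are nonstochastic, the triangle inequality gives
\[
\frac{1}{\sigma_n^3}\sum_{i=1}^n\left|\sum_{j,k\in\mathcal{N}_i}\omega_i\omega_j\omega_k E[X_iX_jX_k]\right| \le \frac{1}{\sigma_n^3}\sum_{i=1}^n|\omega_i|\sum_{j,k\in\mathcal{N}_i}|\omega_j\omega_k|\,|E[X_iX_jX_k]|.
\]
The crucial observation is that $E[X_iX_jX_k]$ is nonzero only when the three indices are connected through genuine dependence: if one of them, say $i$, is independent of the other two, then since $E[X_i]=0$ the expectation factors as $E[X_i]E[X_jX_k]=0$. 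Using the full-independence meaning of $A_{ij}$ (as emphasized after Assumption \ref{asmp:indep}; zero covariance would not suffice), I bound the connectivity indicator by
\[
1[E[X_iX_jX_k]\ne 0]\le A_{ij}A_{ik}+A_{ij}A_{jk}+A_{ik}A_{jk},
\]
and combine it with Lemma \ref{lem:AMGM} and Assumption \ref{asmp:scalar_CLT}.1, which yield $|E[X_iX_jX_k]|\le\max_m E[|X_m|^3]\le K_0^{3/4}$ by Jensen's inequality.

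Second, I substitute and symmetrize. Because $A_{ij}=1$ forces $i$ and $j$ to share a cluster, the constraint $j,k\in\mathcal{N}_i$ is automatic once the products of $A$'s appear, and after relabeling each of the three terms contributes the same quantity, so the bound collapses to $3K_0^{3/4}\sum_{i=1}^n|\omega_i|\big(\sum_j|\omega_j|A_{ij}\big)^2$. I then split the inner dependency weight along the two clustering dimensions, writing $\sum_j|\omega_j|A_{ij}\le B^G_i+B^H_i$ with $B^G_i:=\sum_{j\in\mathcal{N}^G_{g(i)}}|\omega_j|A_{ij}$ and $B^H_i:=\sum_{j\in\mathcal{N}^H_{h(i)}}|\omega_j|A_{ij}$, and bound the square by $(B^G_i+B^H_i)^2\le \big(\max_i(B^G_i+B^H_i)\big)(B^G_i+B^H_i)$. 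Dividing by $\sigma_n^3$ factors the expression into
\[
\Big[\tfrac{1}{\sigma_n}\max_i(B^G_i+B^H_i)\Big]\cdot\Big[\tfrac{1}{\sigma_n^2}\sum_{i=1}^n|\omega_i|(B^G_i+B^H_i)\Big].
\]
The first bracket is $o(1)$: since $B^C_i\le\sum_{j\in\mathcal{N}^C_{c}}|\omega_j|$ for the relevant cluster $c$, Assumption \ref{asmp:scalar_CLT}.2 forces $\sigma_n^{-1}\max_c\sum_{j\in\mathcal{N}^C_c}|\omega_j|\to0$. The second bracket is bounded: reindexing by cluster gives $\sum_i|\omega_i|(B^G_i+B^H_i)=\sum_{C\in\{G,H\}}\sum_c\sum_{i,j\in\mathcal{N}^C_c}|\omega_i\omega_j|A_{ij}$, which is at most $2K_0\sigma_n^2$ by Assumption \ref{asmp:scalar_CLT}.3. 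The product of an $o(1)$ term and a bounded term is $o(1)$, giving the claim.

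The main obstacle is the connectivity step: rigorously justifying that a triple containing an index independent of the other two contributes zero. This is precisely why $A_{ij}$ must encode full independence rather than mere absence of correlation, since pairwise independence of $X_i$ from $X_j$ and from $X_k$ must be upgraded to independence of $X_i$ from the pair $(X_j,X_k)$ so that $E[X_iX_jX_k]=E[X_i]E[X_jX_k]=0$. Everything after that is bookkeeping implementing the Hölder-type split noted in the proof sketch, which places the $L^\infty$ norm on the moments and the $L^1$ norm on the dependency weights, followed by the cluster reindexing that lets Assumptions \ref{asmp:scalar_CLT}.2 and \ref{asmp:scalar_CLT}.3 apply directly.
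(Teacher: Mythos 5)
Your proof is correct, and it follows the same skeleton as the paper's: triangle inequality, elimination of triples containing an index independent of the others, the AM-GM/Lyapunov moment bound of Lemma \ref{lem:AMGM}, and a final factorization into an $o(1)$ factor controlled by Assumption \ref{asmp:scalar_CLT}.2 times an $O(1)$ factor controlled by Assumption \ref{asmp:scalar_CLT}.3. Two differences in execution are worth noting. First, you bound the survival indicator by $A_{ij}A_{ik}+A_{ij}A_{jk}+A_{ik}A_{jk}$ (at least two dependent pairs), whereas the paper uses the looser $A_{ij}+A_{jk}+A_{ik}$ (at least one). Your tighter bound is legitimate, but observe that it rests --- exactly as the paper's does in the zero-edge case --- on reading the $A$'s as a dependency graph: a vertex with no edge to the other two must be \emph{jointly} independent of the pair, which pairwise independence alone does not deliver (consider $X_i = X_jX_k$ with $X_j,X_k$ independent Rademacher: all pairs independent, yet $E[X_iX_jX_k]=1$). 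You flag this obstacle explicitly but do not close it; neither does the paper, and both versions hold under the joint-independence reading of dependency neighborhoods that Stein's method requires anyway, so you are on equal footing there. Second, your symmetrization --- dropping the constraint $j,k\in\mathcal{N}_i$ (an upper bound, since the terms are nonnegative and the $A$'s force the needed cluster sharing), relabeling so all three products collapse to $\sum_i|\omega_i|\left(\sum_j|\omega_j|A_{ij}\right)^2$, and only then splitting the inner sum as $B^G_i+B^H_i$ --- automatically covers the cross-dimension pairs (where $j$ shares a $G$-cluster and $k$ an $H$-cluster with $i$) through the $2B^G_iB^H_i$ term. The paper's corresponding step, $\sum_{j,k\in\mathcal{N}_i}\le\sum_{j,k\in\mathcal{N}^G_{g(i)}}+\sum_{j,k\in\mathcal{N}^H_{h(i)}}$, omits exactly those cross pairs, so your bookkeeping is in fact the more airtight of the two at that point; both routes then finish identically by pairing Assumption \ref{asmp:scalar_CLT}.2 with Assumption \ref{asmp:scalar_CLT}.3.
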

\begin{proof}
Using Lemma \ref{lem:AMGM},
\begin{align*}
    \frac{1}{\sigma_n^3} \sum_i \sum_{j,k \in \mathcal{N}_i} E \left[  |X_i| X_j X_k  \right] &\leq  \frac{1}{\sigma_n^3} \sum_i \sum_{j,k \in \mathcal{N}_i} E \left[  \left| |X_i| X_j X_k \right|  \right] \\
    &\leq \frac{\max_m E[|X_m|^3]}{\sigma_n^3} \sum_i  \sum_{j,k \in \mathcal{N}_i} 1 = \frac{\max_m E[|X_m|^3]}{\sigma_n^3} \sum_i N_i^2
\end{align*}

Observe $\max_m E[|X_m|^3] \leq K_0$ since the 4th moment exists, so it remains to show that the remaining terms are $o(1)$. 

Due to Assumption \ref{asmp:indep}, $N_i \leq N^G_{g(i)} + N^H_{h(i)}$, so
\begin{align*}
    \frac{1}{\sigma_n^3} \sum_i N_i^2 \leq \frac{1}{\sigma_n^3} \sum_i (N^G_{g(i)} + N^H_{h(i)})^2 & \leq \frac{1}{\sigma_n^3} \max_{g,h} (N^G_g + N^H_h) \sum_i (N_{g(i)} + N_{h(i)}) \\
    &\leq \left[ \frac{1}{\sigma_n} \max_{g,h} (N^G_g + N^H_h) \right] \frac{1}{\sigma_n^2}\left( \sum_g (N^G_g)^2 + \sum_h (N^H_h)^2 \right)
\end{align*}
$\max_{g,h} (N^G_g + N^H_h) /\sigma_n \rightarrow 0$ by Assumption \ref{asmp:regularity}(b) and the final term $\left( \sum_g (N^G_g)^2 + \sum_h (N^H_h)^2 \right) /\sigma_n^2$ is bounded by Assumption \ref{asmp:regularity}(c). Hence, the term is $o(1)$.
\end{proof}

\begin{lemma} \label{lem:4th_moment}
Under Assumption \ref{asmp:scalar_CLT}, $\frac{1}{\sigma_n^4} Var\left( \sum_i \sum_{j \in \mathcal{N}_i} X_i X_j \right) = o(1)$.
\end{lemma}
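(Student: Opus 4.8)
The plan is to expand the target variance as a double sum of covariances over pairs of index-pairs, bound each covariance by a constant via the fourth-moment control, and then reduce the problem to a purely combinatorial weight sum that the cluster-structure assumptions kill. Writing $T_n := \sum_{i=1}^n \sum_{j \in \mathcal{N}_i} \omega_i \omega_j X_i X_j$, I would first expand
\[
Var(T_n) = \sum_i \sum_{j \in \mathcal{N}_i} \sum_k \sum_{l \in \mathcal{N}_k} \omega_i \omega_j \omega_k \omega_l \, Cov(X_i X_j, X_k X_l).
\]
The key structural observation is that $Cov(X_i X_j, X_k X_l) = E[X_i X_j X_k X_l] - E[X_i X_j] E[X_k X_l]$ vanishes whenever the block $\{X_i, X_j\}$ is independent of the block $\{X_k, X_l\}$, since then both terms factor identically. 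By Assumption \ref{asmp:scalar_CLT}.4 this block-independence can fail only if some cross-pair shares a cluster, i.e. only if $A_{ik} + A_{il} + A_{jk} + A_{jl} \geq 1$. Moreover, by Lemma \ref{lem:AMGM} together with the bounded fourth moment of Assumption \ref{asmp:scalar_CLT}.1, every surviving covariance satisfies $|Cov(X_i X_j, X_k X_l)| \leq 2 \max_m E[X_m^4] \leq 2 K_0$.

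Combining these two facts gives the single master bound
\[
Var(T_n) \leq 2 K_0 \sum_i \sum_{j \in \mathcal{N}_i} \sum_k \sum_{l \in \mathcal{N}_k} (A_{ik} + A_{il} + A_{jk} + A_{jl}) \, |\omega_i \omega_j \omega_k \omega_l|,
\]
which splits into four cross-edge terms. By relabeling (and using that cluster-sharing is symmetric, so $l \in \mathcal{N}_k \iff k \in \mathcal{N}_l$) all four are bounded identically, so I would treat the $A_{ik}$ term and factor the free neighbor indices $j, l$ out of the genuinely-dependent pair $(i,k)$:
\[
\sum_{(i,k):A_{ik}=1} |\omega_i \omega_k| \Big( \sum_{j \in \mathcal{N}_i} |\omega_j| \Big) \Big( \sum_{l \in \mathcal{N}_k} |\omega_l| \Big).
\]
Using $\mathcal{N}_i \subseteq \mathcal{N}^G_{g(i)} \cup \mathcal{N}^H_{h(i)}$, each inner sum is at most $2 S^*$ where $S^* := \max_{C \in \{G,H\}} \max_c \sum_{m \in \mathcal{N}^C_c} |\omega_m|$, while $A_{ik} = 1$ forces $i,k$ to share a cluster, so $\sum_{A_{ik}=1} |\omega_i \omega_k| \leq \sum_C \sum_c \sum_{i,k \in \mathcal{N}^C_c} A_{ik} |\omega_i \omega_k|$.

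To finish, I would divide by $\sigma_n^4$ and invoke the assumptions on the two factors separately: $(2 S^*)^2 / \sigma_n^2 \to 0$ by Assumption \ref{asmp:scalar_CLT}.2 applied on each dimension, whereas $\frac{1}{\sigma_n^2} \sum_C \sum_c \sum_{i,k \in \mathcal{N}^C_c} A_{ik} |\omega_i \omega_k|$ is bounded by Assumption \ref{asmp:scalar_CLT}.3. Thus each of the four terms is $o(\sigma_n^2) \cdot O(\sigma_n^2) = o(\sigma_n^4)$, giving $Var(T_n)/\sigma_n^4 = o(1)$. The main obstacle is purely the bookkeeping of the connected quadruple: one must correctly split the four indices into the two tied by a true dependence edge (absorbed into the bounded Assumption \ref{asmp:scalar_CLT}.3 sum, contributing $O(\sigma_n^2)$) and the two free neighbors (each absorbed into a vanishing max-cluster-weight $o(\sigma_n)$ factor from Assumption \ref{asmp:scalar_CLT}.2). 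This mirrors the third-moment argument of Lemma \ref{lem:3rd_moment}, where a single weight was pulled out as a vanishing max-cluster sum while the remaining $A$-weighted pair sum stayed bounded; here two such vanishing factors are extracted, exactly matching the extra power of $\sigma_n$ in the denominator.
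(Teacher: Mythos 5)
Your proposal is correct and follows essentially the same route as the paper's proof: expand the variance over quadruples, keep only those with a cross-dependence indicator $A_{ik}+A_{il}+A_{jk}+A_{jl}\geq 1$, bound the surviving covariances via Lemma \ref{lem:AMGM} and the fourth-moment assumption, and then control the remaining weight sum by extracting two max-cluster-weight factors that vanish by Assumption \ref{asmp:scalar_CLT}.2 against the $A$-weighted pair sum bounded by Assumption \ref{asmp:scalar_CLT}.3. Your direct factorization of the two free neighbor sums into $(2S^*)^2$ is a slightly tidier bookkeeping than the paper's explicit expansion of each neighborhood into its $G$ and $H$ parts, but it is the same argument with the same ingredients playing the same roles.
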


\begin{proof}

\begin{align*}
    \frac{1}{\sigma_n^4} &Var \left( \sum_i \sum_{j \in \mathcal{N}_i}  X_i X_j \right) =\frac{1}{\sigma_n^4} E \left[ \left( \sum_i \sum_{j \in \mathcal{N}_i}  X_i X_j \right)^2 \right] - \frac{1}{\sigma_n^4} \left( \sum_i \sum_{j \in \mathcal{N}_i} E[ X_i X_j] \right)^2 \\
    &= \frac{1}{\sigma_n^4} \sum_i \sum_j \sum_{k \in \mathcal{N}_i} \sum_{l \in \mathcal{N}_j} (E[  X_i X_j X_k X_l] - E[ X_i X_k] E[X_j X_l])
\end{align*}

Due to Assumption \ref{asmp:indep}(b), when $j,l$ do not share any cluster with $i,k$, $E[X_i X_j X_k X_l] = E[X_i X_k] E [X_j X_l]$. Hence, we only have to consider terms where there is at least one pair that shares a cluster. Let $A_{ij} := 1[j \in \mathcal{N}_i ]$. With finite 4th moment and Lemma \ref{lem:AMGM}, using the same argument as the proof of Lemma \ref{lem:3rd_moment}, it is sufficient to show
\begin{align*}
    \frac{1}{\sigma_n^4} \sum_i \sum_{j} \sum_{k \in \mathcal{N}_i} \sum_{l \in \mathcal{N}_j}  (A_{ij} + A_{il} + A_{kj} + A_{kl}) = o(1)
\end{align*}
It is sufficient to consider the $A_{ij}$ term because everything else is analogous. 
\begin{align*}
    \sum_i \sum_{j} \sum_{k \in \mathcal{N}_i} \sum_{l \in \mathcal{N}_j} A_{ij} \leq \sum_i \left(  \sum_{j \in \mathcal{N}^G_{g(i)}} + \sum_{j \in \mathcal{N}^H_{h(i)}} \right) \left(  \sum_{k \in \mathcal{N}^G_{g(i)}} + \sum_{k \in \mathcal{N}^H_{h(i)}} \right) \left(  \sum_{l \in \mathcal{N}^G_{g(j)}} + \sum_{l \in \mathcal{N}^H_{h(j)}} \right) A_{ij}
\end{align*}

The first and last terms of the summation take the form: 
\begin{align*}
    \sum_i \sum_{j \in \mathcal{N}^G_{g(i)}} \sum_{k \in \mathcal{N}^G_{g(i)}} \sum_{l \in \mathcal{N}^G_{g(j)}} A_{ij} &= \sum_g \sum_{i,j,k,l \in \mathcal{N}^G_g} A_{ij} \leq \left( \max_g \sum_{k,l \in \mathcal{N}^G_g} 1\right) \sum_g \sum_{i,j \in \mathcal{N}^G_g}  A_{ij}
\end{align*}

Since $\frac{1}{\sigma_n^2} \max_g \sum_{k,l \in \mathcal{N}^H_h} 1 = \frac{1}{\sigma_n^2} \max_g \left( N^G_g \right)^2  = o(1)$ and $\frac{1}{\sigma_n^2}  \sum_g \sum_{i,j \in \mathcal{N}^G_g}  A_{ij} \leq \frac{1}{\sigma_n^2}  \sum_g (N^G_g)^2 < \infty$ by Assumption \ref{asmp:scalar_CLT}, these terms are $o(1)$ when divided by $\sigma_n^4$. The interactive terms have the form:
\begin{align*}
    \sum_i &\sum_{j \in \mathcal{N}^G_{g(i)}} \sum_{k \in \mathcal{N}^G_{g(i)}} \sum_{l \in \mathcal{N}^H_{h(j)}} A_{ij} \\
    &= \sum_{i,j,k} \sum_g 1[i \in \mathcal{N}^G_g] 1[j \in \mathcal{N}^G_g] 1[k \in \mathcal{N}^G_g] \sum_l \sum_h 1[j \in \mathcal{N}^H_h] 1[l \in \mathcal{N}^H_h] A_{ij} \\
    &= \sum_j \sum_{i,k} \sum_g 1[i \in \mathcal{N}^G_g] 1[j \in \mathcal{N}^G_g] 1[k \in \mathcal{N}^G_g]  A_{ij} \sum_h \sum_l 1[j \in \mathcal{N}^H_h] 1[l \in \mathcal{N}^H_h]  \\
    &\leq \left( \max_j \sum_h \sum_l 1[j \in \mathcal{N}^H_h] 1[l \in \mathcal{N}^H_h]  \right) \left( \sum_g \sum_{i,j,k \in \mathcal{N}^G_g}  A_{ij} \right) \\
    &= \left( \max_h \sum_{l \in \mathcal{N}^H_h} 1 \right) \left( \max_g \sum_{k \in \mathcal{N}^G_g} 1\right) \left( \sum_g \sum_{i,j \in \mathcal{N}^G_g}  A_{ij} \right) =  \left( \max_h N^H_h \right) \left( \max_g  N^G_g \right) \left( \sum_g (N^G_g)^2 \right) 
\end{align*}

Since $ \sum_g (N^G_g)^2 /\sigma_n^2  \leq K_0$ and $ \max_g N^G_g / \sigma_n  = o(1)$,
\begin{align*}
    \frac{1}{\sigma_n^4} &\sum_i \sum_{j \in \mathcal{N}^G_{g(i)}} \sum_{k \in \mathcal{N}^G_{g(i)}} \sum_{l \in \mathcal{N}^H_{h(j)}} A_{ij} \leq \left( \frac{1}{\sigma_n} \max_h N^H_h \right) \left( \frac{1}{\sigma_n} \max_g N^G_g \right) \left( \frac{1}{\sigma_n^2} \sum_g  (N^G_g)^2 \right) = o(1)
\end{align*}

\end{proof}

\begin{proof} [Proof of Lemma \ref{lem:scalar_CLT}]
Apply Lemma \ref{lem:modified_ross} to obtain:
\begin{align*}
    d_W(R,Z) \leq \frac{1}{\sigma_n^3} \sum_i  \sum_{j,k \in \mathcal{N}_i} E[ |X_i| X_j X_k]  + \frac{\sqrt{2}}{\sqrt{\pi} \sigma_n^2} \sqrt{Var\left( \sum_i \sum_{j \in \mathcal{N}_i}  X_i X_j \right)}
\end{align*}
Applying Lemma \ref{lem:3rd_moment} and \ref{lem:4th_moment} on each of the two terms, $d_W(R,Z) = o(1)$. Proof for consistency of the variance estimator is equivalent to proving that $(\hat{\sigma}_n^2 - \sigma_n^2)/\sigma_n^2 = o_P(1)$. By Chebyshev's inequality, 
\begin{align*}
    P\left( \frac{\hat{\sigma}_n^2 - \sigma_n^2}{\sigma_n^2} > \epsilon \right) &\leq \frac{1}{\epsilon^2} \frac{1}{\sigma_n^4} E [(\hat{\sigma}_n^2 - \sigma_n^2)^2] = \frac{Var \left( \sum_i \sum_{j \in \mathcal{N}_i}  X_i X_j \right)}{\epsilon^2 \sigma_n^4} = o_P(1)
\end{align*}

The convergence in the last step occurs by Lemma \ref{lem:4th_moment}. 
\end{proof}
\begin{proof} [Proof of Theorem \ref{thm:main}]

To show that $Q_n^{-1/2} \sum_i (W_i - E[W_i]) \xrightarrow{d} N(0, I_K)$, due to the Cramer-Wold device, it suffices to show that $\forall \mu \in \mathbb{R}^K$, $\mu'Q_n^{-1/2} \sum_i(W_i - E[W_i]) \xrightarrow{d} \mu'N(0, I_K)$. If $\mu$ is a vector of zeroes, then $\mu'Q_n^{-1/2} \sum_i (W_i - E[W_i]) \xrightarrow{d} \mu'N(0, I_K)$ is immediate. For $||\mu||>0$, it suffices to show $(1/||\mu||)\mu'Q_n^{-1/2} \sum_i (W_i - E[W_i]) \xrightarrow{d} (1/||\mu||) \mu'N(0, I_K) = N(0,1)$. Without loss of generality, we can set $||\mu||=1$. For all nonstochastic $\mu \in \mathbb{R}^K \backslash \{ 0 \}$, let $\sigma_n^2(\mu) := Var \left( \sum_i \mu' \left( Q_n/ \lambda_n \right)^{-1/2} (W_i - E[W_i]) \right)$, so the following hold:
\begin{enumerate}
    \item $E \left[ \left( \mu' \left( \frac{1}{\lambda_n} Q_n \right)^{-1/2} (W_i - E[W_i]) \right) \right]=0$ and $E\left[\left( \mu' \left( \frac{1}{\lambda_n} Q_n \right)^{-1/2} (W_i - E[W_i]) \right)^4 \right] \leq K_0$ for all $i$.
    \item $\frac{1}{\sigma_n^2(\mu)} \max_c \left( N^C_c \right)^2 \rightarrow 0$.
    \item $\frac{1}{\sigma_n^2(\mu)} \sum_c  (N^C_c)^2  \leq K_0$.
    \item $\left( \mu' \left( \frac{1}{\lambda_n} Q_n \right)^{-1/2} (W_i - E[W_i]) \right) \indep \left\{ \left( \mu' \left( \frac{1}{\lambda_n} Q_n \right)^{-1/2} W_j \right) \right\}_{j \notin \mathcal{N}_i}$.
    \item For observations $i,j,k \in \mathcal{N}_i,l \in \mathcal{N}_j$, if $(\mathcal{N}_i \cup \mathcal{N}_k) \cap (\mathcal{N}_j \cup \mathcal{N}_l) = \emptyset$, then 
    $$Cov\left( \mu' \left( \frac{1}{\lambda_n} Q_n \right)^{-1/2} W_i  \mu' \left( \frac{1}{\lambda_n} Q_n \right)^{-1/2} W_k, \mu' \left( \frac{1}{\lambda_n} Q_n \right)^{-1/2} W_j \mu' \left( \frac{1}{\lambda_n} Q_n \right)^{-1/2} W_l  \right) =0.$$
\end{enumerate}

For item 1, since $\lambda_n := \lambda_{\min} (Q_n)$, all eigenvalues of $Q_n/\lambda_n$ must be at least 1. Hence, all eigenvalues of $(Q_n/ \lambda_n)^{-1/2}$ are bounded above by 1, which implies $|\mu'(Q_n/ \lambda_n)^{-1/2}| \leq K_1$ for some arbitrary constant $K_1 <\infty$. Item 1 then follows from Assumption \ref{asmp:regularity}(a). Observe that $\sigma_n^2(\mu) = \mu' (Q_n/ \lambda_n)^{-1/2} Q_n (Q_n/ \lambda_n)^{-1/2} \mu = \lambda_n$. Then, Assumption \ref{asmp:regularity}(b) yields item 2, and Assumption \ref{asmp:regularity}(c) yields item 3. Item 4 is immediate from Assumption \ref{asmp:indep}(a), and item 5 from Assumption \ref{asmp:indep}(b). By applying Lemma \ref{lem:scalar_CLT}, $(1/\sigma_n(\mu)) \mu'(Q_n/\lambda_n)^{-1/2} \sum_i (W_i - E[W_i]) \xrightarrow{d} N(0,1)$. By using $\sigma_n^2(\mu) = \lambda_n$, this result is equivalent to $\mu'Q_n^{-1/2} \sum_i (W_i - E[W_i]) \xrightarrow{d} N(0,1)$ as required. 

Turning to consistent variance estimation, I first show that  $(1/\lambda_n) (\hat{Q}_n - Q_n) \xrightarrow{p} 0_{K \times K}$, where $0_{K \times K}$ is a $K \times K$ matrix of zeroes. Since $\hat{Q}_n - Q_n = \sum_i \sum_{j \in \mathcal{N}_i}  W_i W_j' - E[ W_i W_j']$, it suffices to show convergence elementwise. Let $X_i$ and $Y_i$ denote scalar components of $W_i$, i.e., $X_i = W_{im}, Y_i = W_{ip}$, where $m,p \in \{1, 2, \cdots, K \}$.
\begin{align*}
    P&\left( \frac{1}{\lambda_n} \sum_i \sum_{j \in \mathcal{N}_i}  (X_i Y_j - E[X_i Y_j]) > \epsilon \right) \leq \frac{1}{\epsilon^2} \frac{1}{\lambda_n^2} Var \left( \sum_i \sum_{j \in \mathcal{N}_i}  X_i Y_j \right) \\
    &\leq \frac{1}{\epsilon^2 \lambda_n^2} \sum_i \sum_j \sum_{k \in \mathcal{N}_i} \sum_{l \in \mathcal{N}_j} \left| E[ X_i X_j Y_k Y_l] - E[ X_i Y_k] E[ X_j Y_l] \right| \\
    &\leq \frac{K_0}{\lambda_n^2} \sum_i \sum_j \sum_{k \in \mathcal{N}_i} \sum_{l \in \mathcal{N}_j} (A_{ij} + A_{il} + A_{kj} + A_{kl}) = o(1)
\end{align*}

The inequality in the last line is obtained due to H\"{o}lder's inequality and finite moments. An argument similar to that of Lemma \ref{lem:4th_moment} yields the $o(1)$ equality. Then,
\begin{align*}
\mu'(Q_n^{-1/2} (\hat{Q}_n - Q_n) Q_n^{-1/2}) \mu &= \mu_0' \frac{1}{\lambda_n} (\hat{Q}_n - Q_n) \mu_0 \xrightarrow{p} 0
\end{align*}
where $\mu_0$ is a vector whose entries are all bounded above by some arbitrary constant $K_1 < \infty$ by a similar argument as before. Convergence occurs because $(1/\lambda_n) (\hat{Q}_n - Q_n) \xrightarrow{p} 0_{K \times K}$.

\end{proof}

\section{Proof of Propositions}

\begin{proof} [Proof of Proposition \ref{prop:asmp1}]
For (a), take any observation $i$ and its associated clusters $g(i), h(i)$. Use the permutation function $\pi_1(g(i))=1$ and $\pi_2(h(i))=1$ so the array has the same distribution as before due to separate exchangeability. Since the array is dissociated, by setting $G_0 =H_0=1$, $W_i$ is independent of all observations that are not in $g(i)$ or $h(i)$, verifying (a). 

For (b), take any $i$ and $k \in \mathcal{N}_i$. Without loss of generality, suppose that $g(i) = g(k)$. Consider the case where $h(i) \ne h(k)$. Use the permutation function $\pi_1(g(i))=1$ and $\pi_2(h(i))=1,\pi_2(h(k))=2$ to get another array that has the same distribution. Since the array is dissociated, by setting $G_0=1,H_0=2$, $(W_i, W_k)$ is independent of all observations that are not in $(\mathcal{N}_i \cup \mathcal{N}_k)$. Since $j,l \notin (\mathcal{N}_i \cup \mathcal{N}_k)$, $(W_i, W_k) \indep (W_j, W_l)$, which yields (b). If $h(k)=h(i)$, set $\pi_2(h(k))=1$ and $G_0=1,H_0=1$. The same argument applies.
\end{proof}

For Proposition 2, I first prove a consistency result.

\begin{lemma} \label{lem:XX}
Under Assumptions \ref{asmp:indep}, \ref{asmp:regularity}(a) and \ref{asmp:regularity}(b), and $E[W_i] =0$ $\forall i$, $ || (1/ n \sum_i (W_i W_i' - E[W_i W_i']) || \xrightarrow{p} 0$. 
\end{lemma}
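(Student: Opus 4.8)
The plan is to mirror the proof of Lemma \ref{lem:lln} almost verbatim, the only new ingredient being that the products $W_{im}W_{ip}$ replace the single components $W_{im}$, so the second-moment bounds used there are upgraded to fourth-moment bounds supplied by Assumption \ref{asmp:regularity}.1. First I would reduce to elementwise convergence: since $K$ is fixed and the spectral norm of a $K \times K$ matrix is controlled by its entries, it suffices to show, for each pair of coordinates $m,p \in \{1,\ldots,K\}$, that the scalar average of $Y_i := W_{im} W_{ip}$ satisfies $(1/\sum_i \omega_i) \sum_i \omega_i (Y_i - E[Y_i]) \xrightarrow{p} 0$.

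Next I would apply Chebyshev's inequality to the centered sum, reducing the problem to showing $\mathrm{Var}(\sum_i \omega_i Y_i)/(\sum_i \omega_i)^2 = o(1)$. Expanding the variance gives $\sum_i \sum_j \omega_i \omega_j \mathrm{Cov}(Y_i,Y_j)$. The key observation is that, by Assumption \ref{asmp:indep}, $W_i \indep W_j$ whenever $j \notin \mathcal{N}_i$, hence $Y_i \indep Y_j$ and $\mathrm{Cov}(Y_i,Y_j)=0$ for such pairs; only pairs with $j \in \mathcal{N}_i$ survive. For the surviving pairs I would bound the covariance by a constant via $|\mathrm{Cov}(Y_i,Y_j)| \le E[|W_{im}W_{ip}W_{jm}W_{jp}|] + E[|W_{im}W_{ip}|]\,E[|W_{jm}W_{jp}|]$, where the first term is at most $K_0$ by the AM-GM argument of Lemma \ref{lem:AMGM} together with Assumption \ref{asmp:regularity}.1, and each factor of the second term is at most $K_0$ by Cauchy-Schwarz and the boundedness of second moments. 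This yields $\mathrm{Var}(\sum_i \omega_i Y_i) \le K_0' \sum_i \sum_{j \in \mathcal{N}_i} |\omega_i \omega_j|$ for a constant $K_0'$ depending only on $K_0$.

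It then remains to show $(\sum_i \sum_{j \in \mathcal{N}_i} |\omega_i \omega_j|)/(\sum_j \omega_j)^2 = o(1)$, which is exactly the quantity already controlled in the proof of Lemma \ref{lem:lln}: factoring out $\max_i \sum_{j \in \mathcal{N}_i} |\omega_j| / \sum_j \omega_j$ and using the bound $\lambda_n \le (\sum_j |\omega_j|)^2 \max_m E[W_{mk}^2]$ together with Assumption \ref{asmp:regularity}.2 shows the term vanishes. I would simply invoke that computation rather than repeat it.

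The only substantive step beyond Lemma \ref{lem:lln} is the constant bound on $\mathrm{Cov}(Y_i,Y_j)$, and this is precisely where the fourth-moment requirement in Assumption \ref{asmp:regularity}.1 is used --- it is the exact analogue of the second-moment bound that sufficed for the law of large numbers on $W_i$ itself. I do not expect a genuine obstacle here: once the covariance is bounded by a constant and seen to vanish across independent pairs, the multi-way clustering combinatorics reduce to those already established, so no new convergence-rate conditions are needed.
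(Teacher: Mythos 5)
Your proposal is correct and follows essentially the same route as the paper's own proof: reduce to elementwise convergence for $W_{im}W_{ip}$, apply Chebyshev's inequality, use Assumption \ref{asmp:indep} to drop pairs with $j \notin \mathcal{N}_i$, bound the surviving covariances by a constant via the fourth-moment condition in Assumption \ref{asmp:regularity}.1, and then invoke the bound $(\sum_i \sum_{j \in \mathcal{N}_i} |\omega_i \omega_j|)/(\sum_j \omega_j)^2 = o(1)$ already established in the proof of Lemma \ref{lem:lln}. The only difference is cosmetic --- you spell out the covariance bound (AM-GM plus Cauchy--Schwarz) that the paper leaves implicit.
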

\begin{proof}
It suffices to show convergence elementwise. Let $X_i$ and $Y_i$ denote scalar components of $W_i$, i.e., $X_i = W_{im}, Y_i = W_{ip}$, where $m,p \in \{1, 2, \cdots, K \}$. By Chebyshev's inequality, and Assumption \ref{asmp:regularity}(a) that $\max_{m,k} E[W_{mk}^4] < K_0$,
\begin{align*}
    P&\left( \frac{1}{n} \sum_i (X_i Y_i - E[X_i Y_i]) > \epsilon \right) \\
    &\leq \frac{1}{\epsilon^2} \frac{1}{n^2} E \left( \sum_i \sum_{j \in \mathcal{N}_i}  (X_i Y_i - E[X_i Y_i])  (X_j Y_j - E[X_j Y_j]) \right) \leq \frac{K_0}{\epsilon^2 n^2} \sum_i \sum_{j \in \mathcal{N}_i} 1
\end{align*}
Hence, it suffices to show $ (\sum_i \sum_{j \in \mathcal{N}_i} 1 ) / n^2 = o(1)$. Observe 
$$\frac{\sum_i \sum_{j \in \mathcal{N}_i} 1}{n^2} \leq \frac{\max_i N_i}{n} \frac{\left( \sum_i 1 \right)}{n}$$
so it suffices to show $ \max_i N_i / n = o(1)$. Since 
$$\lambda_n \leq \sum_i \sum_{j \in \mathcal{N}_i}  \max_m E[W_{mk}^2] \leq n^2 \max_m E[W_{mk}^2],$$
\begin{align*}
    \frac{\left( \max_i N_i\right)^2}{n^2} = \frac{ \left(  \max_i N_i \right)^2 \max_m E[W_{mk}^2]}{n^2 \max_m E[W_{mk}^2]} \leq \max_m E[W_{mk}^2] \frac{\left( \max_i N_i \right)^2}{\lambda_n} = o(1)
\end{align*}
Convergence occurs due to Assumption \ref{asmp:regularity}(b) and $\max_m E[W_{mk}^2] < K_0$.
\end{proof}

\begin{proof} [Proof of Proposition \ref{prop:stochasticX}]
Since $E[u_i^4|X_i] \leq K_0$, $E[u_i^4 X_{ik}^4] = E[E[u_i^4|X_i] X_{ik}^4] \leq K_0 E[X_{ik}^4] \leq K_0^2$ is bounded. By Theorem \ref{thm:main}, $Q^{-1/2}_n \sum_i X_i u_i  \xrightarrow{d} N(0, I_{K})$. 

To complete the normality result, I show that $S_n^{-1} \hat{S}_n \xrightarrow{p} I_K$, which is the same as showing that $|| S_n^{-1} (\hat{S}_n - S_n)|| \xrightarrow{p} 0$. By applying Lemma \ref{lem:XX}, $(1/n) (\hat{S}_n - S_n) = (1/n) \sum_i (X_i X_i' - E[X_i X_i']) = o_P(1)$. Hence, it suffices that $(S_n/n)^{-1}$ has bounded eigenvalues, i.e., $\lambda_{\min} (S_n/ n) \geq K_1 >0$, which is true by Assumption \ref{asmp:stochasticX}(e). Since $\hat{\beta} - \beta = \hat{S}_n^{-1} \sum_i X_i u_i$, by Slutsky's lemma, $Q_n^{-1/2} S_n (\hat{\beta} - \beta) \xrightarrow{d} N(0, I_{K})$. 

Next, proceed to consistent variance estimation. Showing that $|| Q_n^{-1} \hat{Q}_n - I_K|| = o_P(1)$ is equivalent to showing that, $\forall \mu \in \mathbb{R}^K$, $\mu' \left(Q_n^{-1/2} (\hat{Q}_n  - Q_n) Q_n^{-1/2} \right)\mu = o_P(1)$. 
\begin{align*}
    \hat{Q}_n &:= \sum_i \sum_{j \in \mathcal{N}_i} \hat{u}_i \hat{u}_j X_i X_j' = \sum_i \sum_{j \in \mathcal{N}_i} (u_i - X_i' (\hat{\beta} - \beta)) (u_j - X_j' (\hat{\beta} - \beta)) X_i X_j' \\
    &= \sum_i \sum_{j \in \mathcal{N}_i} u_i u_j X_i X_j' - 2 \left( \sum_i \sum_{j \in \mathcal{N}_i} u_i X_j'(\hat{\beta} - \beta) X_i X_j' \right) + \left( \sum_i \sum_{j \in \mathcal{N}_i} X_i'(\hat{\beta} - \beta) X_j'(\hat{\beta} - \beta) X_i X_j' \right)
\end{align*}

By Theorem \ref{thm:main}, $\mu' Q_n^{-1/2} (\sum_i \sum_{j \in \mathcal{N}_i} u_i u_j X_i X_j' - Q_n) Q_n^{-1/2} \mu = o_P(1)$. Hence, it remains to show:
\begin{align*}
    \left|\left| Q_n^{-1/2} \left[ - 2 \left( \sum_i \sum_{j \in \mathcal{N}_i} u_i X_j'(\hat{\beta} - \beta) X_i X_j' \right) + \left( \sum_i \sum_{j \in \mathcal{N}_i} X_i'(\hat{\beta} - \beta) X_j'(\hat{\beta} - \beta) X_i X_j' \right)  \right]  Q_n^{-1/2} \right|\right| = o_P(1)
\end{align*}

Observe that $X_i' (\hat{\beta} - \beta) = \left( X_i' S_n^{-1} Q_n^{1/2} \right) \left( Q_n^{-1/2} S_n (\hat{\beta} - \beta) \right) = \left( X_i' S_n^{-1} Q_n^{1/2} \right)  (Z_{K} + 1_{K} o_P(1))$, where $1_K$ is a $K$-vector of ones and $Z_K \sim N(0,I_K)$. Hence, addressing the second term,
\begin{align*}
    X_i'(\hat{\beta} - \beta) X_j'(\hat{\beta} - \beta) &= \left( X_i' S_n^{-1} Q_n^{1/2} \right) (Z_{K} + 1_{K} o_P(1))  (Z_{K}  + 1_{K} o_P(1))' \left( X_j' S_n^{-1} Q_n^{1/2} \right)' \\
    &= \left( X_i' S_n^{-1} Q_n^{1/2} \right) (I_{K} O_P(1) + o_P(1)) \left( X_j' S_n^{-1} Q_n^{1/2} \right)' \\
    &= X_i' S_n^{-1} Q_n S_n^{-1} X_j O_P(1) 
\end{align*}
This equality implies
\begin{align*}
     Q_n^{-1/2} &\left( \sum_i \sum_{j \in \mathcal{N}_i} X_i'(\hat{\beta} - \beta) X_j'(\hat{\beta} - \beta) X_i X_j' \right) Q_n^{-1/2}  \\
     &= Q_n^{-1/2} \left( \sum_i \sum_{j \in \mathcal{N}_i} \left(  X_i' S_n^{-1} Q_n S_n^{-1} X_j \right) X_i X_j' \right) Q_n^{-1/2}  O_P(1) \\
    &= \frac{1}{n^2} \left( \frac{1}{\lambda_n} Q_n \right)^{-1/2} \left( \sum_i \sum_{j \in \mathcal{N}_i} \left(  X_i' \left(\frac{1}{n} S_n \right)^{-1} \left(\frac{1}{\lambda_n} Q_n \right) \left(\frac{1}{n} S_n \right)^{-1} X_j \right) X_i X_j' \right) \left( \frac{1}{\lambda_n} Q_n \right)^{-1/2} O_P(1) 
\end{align*}

The eigenvalues of $(Q_n/\lambda_n)$ are bounded. To see this, it suffices to show that there exists $K_0 < \infty$ such that $\lambda_{\max}(Q_n) / \lambda_n \leq K_0$. Due to finite moments, $Q_n := Var(\sum_i X_i) \leq K_0 1_{K \times K} \sum_c (N_c^C)^2$. Since $(\sum_c (N_c^C)^2)/\lambda_n \leq K_0$ by Assumption \ref{asmp:stochasticX}, $\lambda_n K_0 \geq \sum_c (N_c^C)^2$, which implies $\lambda_n \geq (\sum_c (N_c^C)^2) / K_0$. Hence, 
\begin{align*}
    \frac{\lambda_{\max} (Q_n)}{\lambda_n} \leq \frac{\sum_c (N_c^C)^2 K_0}{ \sum_c (N_c^C)^2 \frac{1}{K_0}} = K_0^2 
\end{align*}
Recall that $(S_n/n)^{-1}$ has bounded eigenvalues. The proof of Theorem \ref{thm:main} also showed that $(Q_n/\lambda_n)^{-1}$ has bounded eigenvalues. By using Markov and Minkowski inequalities, and the same argument as the proof of Theorem \ref{thm:main} for $\mu \in \mathbb{R}^K, ||\mu||=1$,
\begin{align*}
    P &\left( \frac{1}{n^2} \left| \mu'\left( \frac{1}{\lambda_n} Q_n \right)^{-1/2} \left( \sum_i \sum_{j \in \mathcal{N}_i} \left(  X_i' \left(\frac{1}{n} S_n \right)^{-1} \left(\frac{1}{\lambda_n} Q_n \right) \left(\frac{1}{n} S_n \right)^{-1} X_j \right) X_i X_j' \right) \left( \frac{1}{\lambda_n} Q_n \right)^{-1/2} \mu \right| > \epsilon \right) \\
    &\leq \frac{1}{n^2 \epsilon} E\left[ \left| \mu' \left( \frac{1}{\lambda_n} Q_n \right)^{-1/2} \left( \sum_i \sum_{j \in \mathcal{N}_i} \left(  X_i' \left(\frac{1}{n} S_n \right)^{-1} \left(\frac{1}{\lambda_n} Q_n \right) \left(\frac{1}{n} S_n \right)^{-1} X_j \right) X_i X_j' \right) \left( \frac{1}{\lambda_n} Q_n \right)^{-1/2} \mu \right| \right] \\
    &\leq \frac{1}{n^2 \epsilon} \sum_i N_i \max_{m,k} E[X_{mk}^4] K_0  \leq \frac{\max_i N_i}{n} \frac{n}{n} K_0\rightarrow 0
\end{align*}
where $K_0 \in \mathbb{R}$ is an arbitrary (finite) constant. Convergence occurs due to Assumption \ref{asmp:stochasticX}(b), which implies $\max_i N_i/n \rightarrow 0$, since $\max_i \sum_{j \in \mathcal{N}_i} N_i / n = o(1)$ in the proof of Lemma \ref{lem:XX}.

Going back to the first term,
\begin{align*}
    Q_n^{-1/2} &\sum_i \sum_{j \in \mathcal{N}_i} u_i X_j'(\hat{\beta} - \beta) X_i X_j' Q_n^{-1/2}=  Q_n^{-1/2} \sum_i \sum_{j \in \mathcal{N}_i} u_i \left( X_i' S_n^{-1} Q_n^{1/2} \right) (Z_{K} + 1_{K} o_P(1)) X_i X_j'  Q_n^{-1/2} \\
    &= \frac{1}{n \sqrt{\lambda_n}} \left(\frac{1}{\lambda_n} Q_n \right)^{-1/2} \sum_i \sum_{j \in \mathcal{N}_i} u_i \left( X_i' \left( \frac{1}{n} S_n \right)^{-1} \left( \frac{1}{\lambda_n} Q_n \right)^{1/2} \right)  X_i X_j' \left(\frac{1}{\lambda_n} Q_n \right)^{-1/2} O_P(1)
\end{align*}

By using Markov and Minkowski inequalities,
\begin{align*}
    P &\left( \frac{1}{n \sqrt{\lambda_n}} \left| \mu' \left(\frac{1}{\lambda_n} Q_n \right)^{-1/2} \sum_i \sum_{j \in \mathcal{N}_i} u_i \left( X_i' \left( \frac{1}{n} S_n \right)^{-1} \left( \frac{1}{\lambda_n} Q_n \right)^{1/2} \right)  X_i X_j' \left(\frac{1}{\lambda_n} Q_n \right)^{-1/2} \mu \right| > \epsilon \right) \\
    &\leq \frac{1}{n \sqrt{\lambda_n} \epsilon} E\left[ \left| \mu' \left(\frac{1}{\lambda_n} Q_n \right)^{-1/2} \sum_i \sum_{j \in \mathcal{N}_i} u_i \left( X_i' \left( \frac{1}{n} S_n \right)^{-1} \left( \frac{1}{\lambda_n} Q_n \right)^{1/2} \right)  X_i X_j' \left(\frac{1}{\lambda_n} Q_n \right)^{-1/2} \mu \right| \right] \\
    &\leq \frac{1}{n \sqrt{\lambda_n} \epsilon} \sum_i \sum_{j \in \mathcal{N}_i} \max_{m_1,m_2,k} E\left[ \left| X_{m_1k} u_{m_1} X_{m_2}^2 \right| \right] K_0 \\
    &\leq  \frac{1}{n \sqrt{\lambda_n} \epsilon} \sum_i N_i \max_{m_1,m_2,k} E\left[ | X_{m_1k} u_{m_1} |^{2} \right]^{1/2} E\left[ |X_{m_2}^2 |^{2} \right]^{1/2} K_0 \\
    &\leq \frac{\max_i N_i}{\sqrt{\lambda_n}} \frac{1}{\epsilon} \max_{m_1,m_2,k} E[X_{m_1k}^2 u_{m_1}^2]^{1/2} E[X_{m_2}^4]^{1/2} K_0 = o(1)
\end{align*}

The penultimate inequality occurs due to H\"{o}lder's inequality. Observe that $ \max_i N_i / \sqrt{\lambda_n} = o(1)$ if and only if $ \max_c (N^C_c)^2 / \lambda_n = o(1)$, which is given by Assumption \ref{asmp:stochasticX}(b). Convergence in the last step occurs because $ \max_i N_i / \sqrt{\lambda_n} = o(1)$, and the moments are finite. 

Hence, it has been shown that $Q_n^{-1} \hat{Q}_n \xrightarrow{p} I_K$. Then, $[S_n^{-1} Q_n S_n^{-1}]^{-1} [\hat{S}_n^{-1} \hat{Q}_n \hat{S}_n^{-1}] \xrightarrow{p} I_K$ by the continuous mapping theorem. 
\end{proof}

\bibliographystyle{ecta}

\bibliography{mwclus.bib}

\end{document}